\documentclass[11pt,english]{article}

\usepackage[margin=1in]{geometry}

\usepackage{amsmath,amsthm,mathtools,amssymb}
\usepackage{thmtools,thm-restate}
\usepackage{algorithm}
\usepackage{algorithmic}
\usepackage[usenames,svgnames,xcdraw,table]{xcolor}
\definecolor{DarkBlue}{rgb}{0.1,0.1,0.5}
\definecolor{DarkGreen}{rgb}{0.1,0.5,0.1}
\usepackage{hyperref}
\hypersetup{
   colorlinks   = true,
 linkcolor    = DarkBlue, 
 urlcolor     = DarkBlue, 
	 citecolor    = DarkGreen 
}
\usepackage[capitalize]{cleveref}
\usepackage{graphicx}
\usepackage{svg}
\usepackage{float}
\usepackage{verbatim}

\newtheorem{theorem}{Theorem}
\newtheorem{lemma}{Lemma}
\newtheorem{claim}[lemma]{Claim}
\newtheorem{remark}{Remark}
\newtheorem{proposition}[lemma]{Proposition}

\newtheorem{definition}{Definition}

\let\originalleft\left
\let\originalright\right
\renewcommand{\left}{\mathopen{}\mathclose\bgroup\originalleft}
\renewcommand{\right}{\aftergroup\egroup\originalright}
\renewcommand{\epsilon}{\varepsilon}
\newcommand{\eps}{\varepsilon}
\renewcommand{\backslash}{\setminus}
\newcommand*{\textcal}[1]{%
  \textit{\fontfamily{qzc}\selectfont#1}%
}

\newcommand{\dens}{\rho}
\newcommand{\prof}{v}
\newcommand{\size}{s}
\newcommand{\boundvec}{\tau}

\newcommand{\Comment}[1]{\textcolor{black!80}{\texttt{// #1}}}

\newcommand{\Th}{^{\mathrm{th}}}
\newcommand{\supscript}[1]{^{(#1)}}

\newcommand{\angParen}[1]{\left\langle{#1}\right\rangle}

\newcommand{\EF}{\textsf{EF}}
\newcommand{\FEF}{\textsf{FEF}}

\newcommand{\EFx}{\textsf{EFx}}
\newcommand{\FEFx}{\textsf{FEFx}}
\newcommand{\approxFEFx}[1]{{#1}\textsf{-FEFx}}

\newcommand{\DivisibleEF}{\textsc{DivisibleFEF}}
\newcommand{\ComputeFEFx}{\textsc{ComputeFEFx}}
\newcommand{\FindFEFx}{\textsc{FindFEFx}}
\newcommand{\ComputeApproxFEFx}[1]{\textsc{ApxFEFx}_{#1}}

\newcommand{\Knapsack}{\mathtt{Kns}}
\newcommand{\FPTASKnapsack}{\mathtt{ApxKns}}
\newcommand{\FindMinimalEnviedSubset}{\mathtt{FindMinimalEnviedSubset}}
\newcommand{\ApproxMinEnviedSubset}{\mathtt{ApxMinEnvied}}

\newcommand{\Opt}{\textrm{Opt}}

\newcommand{\supp}{\textrm{supp}}
\newcommand{\charity}{\textrm{charity}}
\newcommand{\nphard}{\textrm{NP-Hard}}
\newcommand{\nphardparam}{\mu}

\DeclareMathOperator*{\argmax}{arg\,max}

\title{\bfseries Guaranteeing Envy-Freeness under Generalized Assignment Constraints}
\author{Siddharth Barman\thanks{Indian Institute of Science. {\tt barman@iisc.ac.in}} \quad Arindam Khan\thanks{Indian Institute of Science. {\tt arindamkhan@iisc.ac.in}} \quad Sudarshan Shyam\thanks{Aarhus University. {\tt shyam@cs.au.dk}} \quad K. V. N.~Sreenivas\thanks{Indian Institute of Science. {\tt venkatanaga@iisc.ac.in}}}
\date{\empty}
\begin{document}

\maketitle
\begin{abstract}
We study fair division of goods under the broad class of generalized assignment constraints. In this constraint framework, the sizes and values of the goods are agent-specific, and one needs to allocate the goods among the agents fairly while further ensuring that each agent receives a bundle of total size at most the corresponding budget of the agent. Since, in such a constraint setting, it may not always be feasible to partition all the goods among the agents, we conform---as in recent works---to the construct of charity to designate the set of unassigned goods. For this allocation framework, we obtain existential and computational guarantees for envy-free (appropriately defined) allocation of divisible and indivisible goods, respectively, among agents with individual, additive valuations for the goods. 

We deem allocations to be fair by evaluating envy only with respect to feasible subsets. In particular, an allocation is said to be feasibly envy-free ($\FEF$) iff each agent prefers its bundle over every (budget) feasible subset within any other agent's bundle (and within the charity). The current work establishes that, for divisible goods, $\FEF$ allocations are guaranteed to exist and can be computed efficiently under generalized assignment constraints. Note that, in the presence of generalized assignment constraints, even the existence of such fair allocations of divisible goods is nonobvious, a priori. Our existential and computational guarantee for $\FEF$ allocations is built upon an incongruity property satisfied across a family of linear programs. This novel proof template is interesting in its own right. 

In the context of indivisible goods, $\FEF$ allocations do not necessarily exist, and hence, we consider the fairness notion of feasible envy-freeness up to any good (\FEFx). Under this notion, an allocation of indivisible goods is declared to be fair iff for each pair of agents, $a$ and $b$, envy-freeness holds for agent $a$ against every feasible and strict subset of $b$'s bundle; a similar guarantee is required with respect to the charity. We show that, under generalized assignment constraints, an $\FEFx$ allocation of indivisible goods always exists. In fact, our $\FEFx$ result resolves open problems posed in prior works, which provide existence guarantees under weaker fairness notions and more specialized constraints. Further, for indivisible goods and under generalized assignment constraints, we provide a pseudo-polynomial time algorithm for computing $\FEFx$ allocations, and a fully polynomial-time approximation scheme (FPTAS) for computing approximate $\FEFx$ allocations. 
\end{abstract}
\section{Introduction}
A significant body of research---at the interface of mathematical economics and computer science---addresses fairness in resource allocation settings \cite{moulin2004fair,handbook2016}. This growing literature captures various real-world application domains, e.g., fair division of land \cite{shtechman2021fair}, public housing units \cite{deng2013story,benabbou2020finding}, electricity \cite{baghel2022fair}, courses among students \cite{Budish2017CourseMA}, and food donations \cite{aleksandrov2015online}. The fair division literature is typically categorized based on the nature of the underlying resources. In particular, resources that can be fractionally assigned (referred to as divisible goods) have been the focus of classic fair division results; see, e.g., \cite{brams1996fair} and \cite{varian1974equity}. The complementary case of indivisible goods (which have to be integrally assigned) has received more attention in recent years \cite{amanatidis2022fair}. Note that divisible goods capture resources such as land and processing time on machines, and indivisible ones provide a framework for discrete resources, like housing units and inheritance. 

While fairly allocating heterogeneous resources, divisible or indivisible, we are often required to respect allocation constraints. Indeed, in many settings, not all allocations of the goods among the participating agents are feasible. For instance, in the context of land division, a typical requirement is to assign each agent a single, connected plot \cite{brams1996fair}. Motivated by such considerations, a budding thread of research in fair division focuses on fair allocations that further satisfy relevant constraints; see \cite{suksompong2021constraints} for a survey on constraints in fair division. Contributing to this active line of work, the current paper extends the reach of fair division guarantees to a broad class of constraints, namely to generalized assignments constraints. 

In the generalized assignment constraints framework, goods have agent-specific values and sizes, and one needs to allocate the goods among the agents such that each agent receives a bundle of total size at most the corresponding budget of the agent. Note that, in a constraint setting, it may not always be feasible to partition all the goods among the agents. Hence, as in recent works (see, e.g., \cite{chaudhury2021little,caragiannis2019envy,wu2021budget}), we conform to the construct of charity to designate the set of unassigned goods.

Under generalized assignment constraints, the problem of maximizing agents' social welfare (without fairness considerations) is referred to as the Generalized Assignment Problem (GAP). This optimization problem has been extensively studied in combinatorial optimization, approximation algorithms, and operations research \cite{martello1990knapsack}. GAP captures many prominent problems, such as AdWords \cite{mehta2007adwords}, display ads problem \cite{feldman2009online}, the (multiple) knapsack problem \cite{chekuri2005polynomial}, and weighted bipartite matching \cite{khuller1994line}. These instantiations highlight the encompassing nature of generalized assignment constraints and their significance in various application domains; additional applications of GAP, in particular, are provided in the survey article \cite{oncan2007survey}. 

Complementing the utilitarian objective of GAP with a focus on fairness, we study the allocation of goods---divisible and indivisible, respectively---under generalized assignment constraints. Our fairness guarantees are in terms of envy-freeness. Under this quintessential notion, an allocation is deemed to be fair (envy-free) iff every agent values the bundle assigned to her over that of any other agent. Note that, in the presence of agent-specific constraints, the bundle assigned to an agent $b$ might not be feasible for another agent $a$. Hence, for fair division under constraints, it is not justified to evaluate envy by considering the value that agent $a$ has for another agent $b$'s entire bundle. Addressing this issue, prior works (see, e.g., \cite{dror,wu2021budget}) adapt the notion of envy-freeness (to settings with constraints) by evaluating envy only with respect to feasible subsets. 

Specifically, an allocation is said to be feasibly envy-free ($\FEF$) iff each agent prefers its bundle over every budget-feasible subset\footnote{Here, in the context of divisible goods and for ease of exposition, we use the term subset (of goods) to denote fractional assignments of the goods.} within any other agent's bundle;  a similar guarantee is required with respect to the charity.  The current work establishes that, for divisible goods, $\FEF$ allocations are guaranteed to exist and can be computed efficiently under generalized assignment constraints (Theorem \ref{thm:divisible-ef}). It is relevant to note that, before the current work, even the existence of $\FEF$ allocations of divisible goods under generalized assignment constraints was not known. Indeed, $\FEF$ is a refinement of envy-freeness and not a restriction. In particular, existence of $\FEF$ allocations is not implied by the fact that, in the absence of constraints, envy-free allocations exist: In the unconstrained setting, one can directly obtain an envy-free allocation of divisible goods by dividing each good equally among the agents. However, such a uniform allocation might not be feasible in the presence of generalized assignment constraints. Furthermore, in the unconstrained setting, allocations of divisible goods that maximize Nash welfare are known to be envy-free \cite{varian1974equity}. Even this standard approach (i.e., maximizing Nash welfare subject to the constraints) fails to provide $\FEF$ allocations of divisible goods under the current constraint setup; see Appendix \ref{sec:mnw-doesnt-work}.

In the context of indivisible goods, $\FEF$ allocations do not necessarily exist,\footnote{Consider a single indivisible good and two identical agents, for whom the good has unit value and zero size. In this fair division instance, no allocation (even the one in which the good it given to charity) is \FEF.} and hence, we consider the fairness notion of feasible envy-freeness up to any good (\FEFx). Under this notion, an allocation of indivisible goods is declared to be fair iff for each pair of agents, $a$ and $b$, envy-freeness holds for agent $a$ against every feasible and \emph{strict} subset of $b$'s bundle; a similar guarantee is required with respect to the charity. In other words, $\FEFx$ mandates that, for each pair of agents, $a$ and $b$, after the removal of \emph{any} good $g$ from agent $b$'s bundle, say $A_b$, envy-freeness holds for agent $a$ with respect to all feasible subsets $S \subseteq A_b\setminus \{g\}$. We note that $\FEFx$ is a direct adaptation---to the constraint setting---of envy-freeness up to any good ($\EFx$). This notion has received significant attention in recent works on fair division of indivisible goods, see, e.g., \cite{chaudhury2021little,caragiannis2019unreasonable}. In particular, the existence of $\EFx$ allocations in the unconstrained setting (and notably without charity) is a central open problem in discrete fair division \cite{procaccia2020technical}. For indivisible goods, $\EFx$ provides a persuasive analog of envy-freeness. Similarly, $\FEFx$ renders a strong fairness guarantee in constraint settings. {To appreciate the high benchmark set by $\FEFx$, note that if in an $\FEFx$ allocation, an agent $b$ is assigned a good $g$ which, by itself, is infeasible for agent $a$, then we must have envy-freeness for $a$ against every feasible subset within $b$'s bundle.}

We show that, under generalized assignment constraints and with charity, an $\FEFx$ allocation of indivisible goods always exists (Theorem \ref{thm:indivisible_efx}). The existential guarantee obtained in the current work strengthens the ones provided in \cite{wu2021budget}, \cite{gan2021approximately}, and \cite{barman2022finding}. The strengthening here is in the following two senses: (i) the prior works address fairness notions which are implied by $\FEFx$, and (ii) the works consider settings in which the size of each good is the same for all the agents, though the budgets can be agent-specific. Notably, the obtained $\FEFx$ existential guarantee (Theorem \ref{thm:indivisible_efx}) positively resolves open problems posed in these prior works; see, e.g., Open Problem 7.2 in \cite{suksompong2021constraints}.

Furthermore, we provide a pseudo-polynomial time algorithm for computing $\FEFx$ allocations of indivisible goods under generalized assignment constraints (Theorem \ref{theorem:pseudo-poly}). Complementing this algorithmic result, we show that, in the current context, computing an $\FEFx$ allocation is {\rm NP}-hard (Theorem \ref{theorem:fefx-np-hard} in Appendix \ref{appendix:fefx-np-hard}). Note that this hardness result for $\FEFx$ stands in contrast to the known polynomial-time algorithms for weaker notions; see \cite{wu2021budget} and \cite{barman2022finding}. Building on the constructive proof of existence for $\FEFx$ allocations, we also provide a fully polynomial-time approximation scheme (FPTAS) for computing approximate $\FEFx$ allocations (Theorem \ref{thm:approx_indivisible_efx}). These algorithmic results for $\FEFx$ hold even under matroid constraints (see Remark \ref{rem:extend-matroid} in Section \ref{sec:approx-fefx}). \\

\noindent
{\bf Our Techniques and Additional Related Work.} We obtain the universal existence and efficient computation of $\FEF$ allocations of divisible goods by developing a property called {\em density domination} (Definition \ref{definition:DD}). At a high level, an allocation of divisible goods is said to satisfy this property iff the fractional assignment of each agent $a$ is supported only on its top-$\tau_a$ most dense goods,\footnote{For an agent $a$, the density of a good $g$ is defined as $g$'s value for $a$ divided by its size for $a$.} for some threshold $\tau_a \in \mathbb{Z}_+$. Furthermore, for each of the $(\tau_a-1)$ most dense goods, the fractional assignment of the good to agent $a$ is at least as much as the good's assignment to any other agent. Also, each of these $(\tau_a-1)$ goods need to be completely allocated among the agents, i.e., no fraction of such a good is left in the charity. Note that this property is defined with respect to thresholds, $\tau_a$, for each agent $a$. Moreover, for any given tuple of thresholds, one can write a polynomially-large linear program to test whether there exists an allocation that upholds density domination. This is in contrast to the $\FEF$ definition, which, in and of itself, does not admit such a succinct verification. We prove that density domination implies envy-freeness (Lemma \ref{lem:density-dom}).   

However, it is a priori nonobvious if there exists any allocation that satisfies density domination, i.e., whether there exist thresholds for which the above-mentioned linear program is feasible. A key technical contribution of the work is to prove that, in fact, a density dominating allocation necessarily exists and the corresponding thresholds can be computed efficiently. This gives us the desired existential and the computational guarantee for $\FEF$ allocations, since density domination implies $\FEF$. Our proof of existence of density dominating allocations is based on a novel incongruity property between the above-mentioned linear program and its relaxation (Lemma \ref{lem:progress}). Our proof template is distinctive in its own right; in particular, it does not invoke a fixed-point theorem or a parity argument. 

Our $\FEFx$ guarantee is obtained by utilizing the idea of swapping minimally envied subsets (Definition \ref{defn:envied_set}) from the charity. This iterative rule was used in \cite{chaudhury2021little} (albeit in the absence of constraints) to find $\EFx$ allocations, with the overarching goal of bounding the size of the charity. When executed in constraint settings, the method from \cite{chaudhury2021little} yields a pseudo-polynomial time algorithm for computing $\FEFx$ allocations under generalized assignment constraints. Furthermore, using the known FPTAS for the standard Knapsack problem, one obtains an FPTAS for computing approximate $\FEFx$ allocations. 

In discrete fair division, another interesting constraint class is obtained via requiring that the bundle assigned to each agent is an independent set of a matroid; see \cite{dror,biswas2018fair,kyropoulou2020almost}. As mentioned previously, under matroid constraints and in the presence of charity, $\FEFx$ computation continues to admit both a pseudo-polynomial time algorithm and an FPTAS.

\section{Notation and Preliminaries}
\label{sec:notation}
We study the problem of fairly allocating a set of $m \in \mathbb{Z}_+$ goods among a set of $n \in \mathbb{Z}_+$ agents with individual budget constraints. Our work addresses the fair allocation of divisible goods and indivisible goods, respectively. Recall that a divisible good is one that can be assigned fractionally among the agents, whereas an indivisible good has to be integrally allocated.

The cardinal preference of each agent $a \in [n]$, over the goods, is expressed via the valuation function $\prof_a(\cdot)$.  Specifically, we will write $\prof_a(g) \in \mathbb{Q}_+$ to denote the value of a good $g \in [m]$ for an agent $a \in [n]$ and focus on settings in which that agents have additive valuations over the goods. Furthermore, the agent-specific budget constraints are expressed in terms of sizes: write $\size_a(g) \in \mathbb{Q}_+$ to denote the size of any good $g \in [m]$ for each agent $a \in [n]$. Also, with each agent $a \in [n]$, we have an associated budget $B_a \in \mathbb{Q}_+$. Our work addresses generalized assignment constraints,  i.e., we address fair division while conforming to the constraint that each agent $a \in [n]$ is assigned goods with cumulative size (under $\size_a(\cdot)$) at most $B_a$. Below we detail these constraints for the case of indivisible and divisible goods, respectively. Also, via scaling and without loss of generality, we assume that all the values and sizes are integral, $v_a(g), s_a(g) \in \mathbb{Z}_+$, for all agents $a \in [n]$ and goods $g \in [m]$. 

For generalized assignment constraints, an important construct is that of goods' densities. In particular, we will write $\dens_a(g) \coloneqq \frac{\prof_a(g)}{\size_a(g)}$ to denote the density of the good $g \in [m]$ for agent $a \in [n]$.

Throughout, an instance of the fair division problem, under generalized assignment constraints, will be specified as a tuple $\langle [n], [m], \{v_a(g) \}_{a, g}, \{s_a(g)\}_{a,g}, \{B_a\}_a \rangle$. \\

 \noindent
{\bf Indivisible Goods.} In the context of indivisible goods, each agent $a \in [n]$ is assigned a subset of goods, $A_a \subseteq [m]$, which we will refer to as the bundle assigned to agent $a \in [n]$. Recall that the agents have additive valuations and, hence, agent $a$'s value for any subset of indivisible goods $S \subseteq [m]$ satisfies $\prof_a(S) = \sum_{g \in S} \prof_a(g)$.

In the indivisible-goods setting, an allocation $\mathcal{A}=(A_1,\ldots, A_n)$ refers to an $n$-tuple of pairwise disjoint subsets of the goods (i.e., $A_a \cap A_b = \emptyset$ for $a \neq b$), wherein bundle $A_a$ is assigned to agent $a$.

For an agent $a \in [n]$, a subset of goods $S \subseteq [m]$ is said to be feasible iff the total size of $S$, according to agent $a$, is at most $a$'s budget $B_a$, i.e., $s_a(S) = \sum_{g \in S} s_a(g) \leq B_a$. Furthermore, an allocation $\mathcal{A} = (A_1,\ldots, A_n)$ is said to be {feasible} iff for each agent $a$ the assigned bundle, $A_a$, is feasible, $\size_a(A_a) = \sum_{g \in A_a} \size_a(g) \leq B_a$, for all $a \in [n]$. Unless otherwise stated, all the allocations that we encounter are feasible. Hence, for ease of exposition, we will directly use the term allocation instead of feasible allocation.

Note that in a constrained fair division setting, it may not always be feasible to partition all the indivisible goods among the agents. For instance, consider a fair division instance in which the size of each good $g$ with respect to any two agents, $a$ and $b$, is the same ($\size_a(g) = \size_b(g)$), but the combined budget of all the agents is strictly less than the total size of the $m$ goods. In such settings, under any feasible allocation, a subset of the goods must remain unassigned. Conforming to prior works, we utilize the notion of charity of denote the set of unassigned goods. Formally, for any allocation $\mathcal{A}=(A_1, \ldots, A_n)$, the set of goods given to the {charity} is denoted as $C_\mathcal{A} = [m] \setminus \cup_{i=1}^{n} A_i$. When the allocation $\mathcal{A}$ is clear from context, we will drop the subscript and write $C$ to denote the set of goods in charity.

To simplify notation, for any subset of indivisible goods $S \subseteq [m]$ and any good $g \in [m]$, we write $S-g$ to denote $S\setminus\{g\}$ and $S+g$ to denote $S\cup\{g\}$. \\

\noindent
{\bf Divisible Goods.} In the case of divisible goods, we will utilize $m$-dimensional vectors, $x_a = \left(x_{a,1}, x_{a,2}, \ldots, x_{a,m} \right) \in [0,1]^m$, to denote the fractional assignment of the goods to each agent $a \in [n]$; in particular, the $g\Th$ component, $x_{a,g} \in [0,1]$, denotes the fraction of the good $g$ assigned to agent $a$. Here, a (fractional) allocation $x = (x_1, \ldots,x_a, \ldots, x_n) \in [0,1]^{n \times m}$ refers to a tuple of assignment vectors (one for each agent) such that at most one unit of each good is assigned among the agents, i.e., $\sum_{a=1}^n x_{a,g} \leq 1$ for all goods $g \in [m]$.

For any assignment vector $y =(y_1, y_2, \ldots, y_m) \in [0,1]^m$, agent $a$ has value $\prof_a(y) = \sum_{g=1}^m y_g \  \prof_a(g)$ and size $\size_a(y) = \sum_{g=1}^m y_g \ \size_a(g)$. In the divisible goods setting, an allocation $x = (x_1,\ldots, x_n) \in [0,1]^{n \times m}$ is deemed to be feasible iff the fractional assignments uphold the budget constraints of all the agents, i.e., $\size_a(x_a) \leq B_a$ for all agents $a \in [n]$. Analogous to the indivisible goods setting, we will use the construct of charity to denote the unassigned fractions of the goods. In particular, for any allocation $x = (x_1,\ldots, x_n) \in [0,1]^{n \times m}$, write  $x_{\text{charity},g}$ to denote the fraction of good $g$ given to charity, $x_{\text{charity},g} = 1 - \sum_{a=1}^{n} x_{a,g}$.

In the divisible goods setting, for any good $g$ and agent $a$, we continue to denote the density as $\rho_a(g) = {\prof_a(g)}/{\size_a(g)}$.

\noindent
\emph{Vector operations.} Since, in the divisible goods setting we denote an allocation by a tuple of $m$-dimensional vectors, we will define some operations on vectors that will be useful. For any pair of vectors $u, w \in [0,1]^m$, we write $u \leq w$ to denote that $u$ is component-wise dominated by $w$, i.e., $u \leq w$ iff $u_i \leq w_i$ for all components $i \in [m]$. Note that, if $u$ and $w$ are binary vectors (i.e., are characteristic vectors of subsets), then $u \leq w$ corresponds to subset containment.

In addition, for vectors $u, w \in [0,1]^m$, we will write $w-u \in [0,1]^m$ to denote the vector whose $i\Th$ component is equal to $\max \{ 0, w_i-u_i\}$, for all $i \in [m]$. Furthermore, $w+u$ denotes the vector whose $i\Th$ component is equal to $\min\{ 1, w_i + u_i\}$, for all $i \in [m]$. Indeed, if $u$ and $w$ are binary (characteristic) vectors then the vector $w-u$  corresponds to set difference and $w+u$ to set union. Furthermore, the vector $w \cap u$ is defined, component-wise, as $\min\{ w_i, u_i\}$,  for all $i \in [m]$.

The vector $e_i$ denotes the $i\Th$ standard basis vector in $\mathbb{R}^m$. For a vector $u \in [0,1]^m$, the support $\supp(u) \coloneqq \{i \in [m] : u_i >0 \}$.
Hence, $\supp(u)$ denotes the subset of goods that are fractionally assigned under vector $u$. \\

\noindent
{\bf Fairness notions.} Our work obtains universal existential guarantees for two central notions of fairness: (i) envy-freeness (in the divisible goods setting) and (ii) envy-freeness up to any good (in the indivisible goods context).

An allocation is said to be envy-free (\EF) iff every agent $a$ values the bundle assigned to her at least as much as any other agent's bundle. Classic results in fair division literature (see, e.g., \cite{varian1974equity}) show that, in the absence of constraints, an envy-free division of divisible goods is guaranteed to exist. The current work establishes that a  universal existential guarantee holds even under generalized assignment constraints, with a natural adaptation of envy-freeness (defined to accommodate constraints).

Note that, in the presence of agent-specific constraints, the bundle assigned to an agent $b$ might not be feasible for another agent $a$. Hence, for fair division under constraints, it is not justified to evaluate envy by considering the value that agent $a$ has for another agent $b$'s entire bundle. Addressing this issue, in particular, prior works (see, e.g., \cite{dror}) adapt the notion of envy-freeness (to settings with constraints) by evaluating envy only with respect to feasible subsets. That is, an agent $a$ is said to be envious of another agent $b$ (or the charity) only if there exists a subset---within $b$'s bundle (or within charity)---that is both feasible for $a$ and has value more than $a$'s bundle. An allocation wherein no such envy exists is said to be feasibly envy-free (\FEF). Formally,\footnote{Recall that for an allocation $x = (x_1, \ldots,x_a, \ldots, x_n)$, the vector $x_{\text{charity}} \in [0,1]^m$ denotes the unassigned fractions of all the goods, i.e., $x_{\text{charity}, g} = 1 - \sum_{a=1}^n x_{a,g}$, for all $g \in [m]$.}

\begin{definition}[$\FEF$]\label{definition:fef}
In an allocation $x = (x_1,\ldots, x_n) \in [0,1]^{n \times m}$ (of divisible goods), an agent $a \in [n]$ is said to be envy-free towards agent $b \in [n]$ iff for all fractional assignments $y \leq x_b$, with the property that $s_a(y) \leq B_a$, we have $v_a(x_a) \geq v_a(y)$. Similarly, an agent $a \in [n]$ is said to be envy-free towards the charity iff for all $y \leq x_{\text{\rm charity}}$, with the property that $s_a(y) \leq B_a$, we have $v_a(x_a) \geq v_a(y)$.

An allocation $x = (x_1,\ldots, x_n) \in [0,1]^{n \times m}$ is said to be feasibly envy-free (\FEF) iff every agent $a \in [n]$ is envy-free towards all other agents $b \in [n]$ and the charity.
\end{definition}

One can identify $\FEF$ allocations for indivisible goods by instantiating Definition \ref{definition:fef} with binary (characteristic) vectors. However, simple examples rule out the general existence of $\FEF$ allocations in the discrete fair division context. Indeed, even in the absence of constraints, envy-free allocations of indivisible goods are not guaranteed to exist. Motivated, in part, by this consideration, prior works in discrete fair division have considered multiple relaxations of envy-freeness; see \cite{amanatidis2022fair} for a  survey. In this thread of work on discrete fair division, one of the most compelling analogs of envy-freeness is the notion of envy-freeness up to any good (\EFx). Specifically, an allocation $\mathcal{A}= (A_1, \ldots, A_n)$ (of indivisible goods) is said to be $\EFx$ iff for all pairs of agents $a, b \in [n]$, we have $v_a(A_a) \geq v_a(A_b - g)$ for all goods $g \in A_b$. In other words, allocation $\mathcal{A}= (A_1, \ldots, A_n)$ is $\EFx$ iff, for each pair of agents $a, b \in [n]$, agent $a$ is not envious of any
\emph{strict} subset of agent $b$'s bundle.

For generalized assignment constraints and with charity, we next define an adaptation of $\EFx$ -- namely feasibly envy-free up to any good (\FEFx). Such an  adaptation was considered in \cite{dror} for matroid constraints.

\begin{definition}[$\FEFx$]\label{definition:fefx}
In an allocation $\mathcal{A}=(A_1,\ldots,A_n)$ of indivisible goods, an agent $a\in[n]$ is said to be $\FEFx$ towards an agent $b \in [n]$ iff for every strict subset $S \subsetneq A_b$, with the property that $s_a(S) \leq B_a$, we have $ \prof_a(A_a) \geq \prof_a(S)$. Similarly, an agent $a\in[n]$ is said to be $\FEFx$ towards the charity $C = [m] \setminus \cup_{i=1}^{n} A_i$ iff, for every strict and feasible subset $S \subsetneq C$, we have $ \prof_a(A_a) \geq \prof_a(S)$.

An allocation $\mathcal{A}$ is said to be $\FEFx$ iff every agent $a\in[n]$ is $\FEFx$ towards every agent $b\in[n]$ and the charity.
\end{definition}
Note that, by definition, in an $\FEFx$ allocation $\mathcal{A}=(A_1,\ldots,A_n)$, for each pair of agents $a, b \in [n]$ (with $A_b \neq \emptyset$), for any good $g\in A_b$ and, subsequently, for every subset $S \subseteq A_b-g$,  with the property that $s_a(S) \leq B_a$, we have $ \prof_a(A_a) \geq \prof_a(S)$. A similiar guarantee holds for every agent $a \in [n]$ and with respect to the charity.

\section{$\FEF$ Allocations of Divisible Goods}
\label{section:divisible-fef}
This section develops a polynomial-time algorithm for finding \FEF{} allocations of divisible goods under generalized assignment constraints. The guaranteed success of the algorithm establishes the universal existence of $\FEF$ allocations under these constraints.

Given any fair division instance, $\langle [n], [m], \{v_a(g) \}_{a, g}, \{s_a(g)\}_{a,g}, \{B_a\}_a \rangle$, for design and analysis purposes, we will include a `fictional' good ${m+1}$ with value $v_a({m+1}) =0$ and size $s_a({m+1}) = 2 n \max_b B_b$, for all agents $a \in [n]$. The inclusion of this good, $m+1$, ensures, in particular, that we can always work with allocations $x=(x_1, \ldots, x_n) \in [0,1]^{n \times (m+1)}$ in which the budget constraint of every agent holds with equality, i.e., $s_a(x_a) = B_a$.\footnote{One can obtain this equality by appropriately setting $x_{a, m+1} \in \left[0, \frac{1}{2n} \right]$.} 

The inclusion of the $(m+1)\Th$ good also implies that, in this section, the fractional assignment to each agent $a$ is denoted by an $(m+1)$-dimensional vector $x_a=(x_{a,1}, x_{a,2},\dots,x_{a,m},x_{a,m+1})$. Here, $x_{a,g}$ denotes the fraction of the good $g$ assigned to $a$. The following proposition notes that the included good $m+1$ has no bearing on feasible envy-freeness (\FEF).  

\begin{restatable}{proposition}{PropNoCons}
\label{proposition:no-cons} 
Let $\langle [n], [m], \{v_a(g) \}_{a, g}, \{s_a(g)\}_{a,g}, \{B_a\}_a \rangle$ be a fair division instance with generalized assignment constraints and let $x \in [0,1]^{n \times (m+1)}$ be an $\FEF$ allocation in the constructed instance, with $m+1$ goods. Then, setting $\overline{x}_{a,g} = x_{a,g}$ for all agents $a \in [n]$ and all goods $g \in [m]$ yields an $\FEF$ allocation $\overline{x} \in [0,1]^{n \times m}$, for the underlying instance with $m$ goods. 
\end{restatable}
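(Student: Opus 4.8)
The plan is to show that the fictional good $m+1$, having zero value for every agent, cannot be the source of any envy, so deleting it from an $\FEF$ allocation preserves the $\FEF$ property. First I would set up the correspondence between the two instances. Given the $\FEF$ allocation $x \in [0,1]^{n\times(m+1)}$ for the $(m+1)$-good instance, the truncated allocation $\overline{x}$ simply drops the last coordinate of each $x_a$; feasibility for $\overline{x}$ is immediate since $s_a(\overline{x}_a) \le s_a(x_a) \le B_a$, the first inequality holding because we only removed the nonnegative contribution $x_{a,m+1}\, s_a(m+1)$.

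The core observation is that for any assignment vector $w \in [0,1]^{m+1}$, its value to agent $a$ is unchanged by zeroing out the last coordinate, since $v_a(m+1) = 0$ gives $v_a(w) = \sum_{g=1}^{m} w_g\, v_a(g) = v_a(\overline{w})$, where $\overline{w}$ is $w$ restricted to $[m]$. In particular $v_a(x_a) = v_a(\overline{x}_a)$. Next I would check each of the three envy conditions for $\overline{x}$ by pulling the relevant witnesses back up to the $(m+1)$-good instance and invoking the fact that $x$ is $\FEF$. Concretely, suppose for contradiction that in $\overline{x}$ some agent $a$ envies agent $b$: there is a $y \le \overline{x}_b$ with $s_a(y) \le B_a$ and $v_a(y) > v_a(\overline{x}_a)$. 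Viewing $y$ as a vector in $[0,1]^{m+1}$ with $y_{m+1} = 0$, we have $y \le x_b$ (its last coordinate is $0 \le x_{b,m+1}$), its size is unchanged and still at most $B_a$, and $v_a(y) > v_a(\overline{x}_a) = v_a(x_a)$, contradicting that $a$ is envy-free towards $b$ under $x$. The same argument, verbatim, handles envy towards the charity, once I note that the charity vector satisfies $\overline{x}_{\text{charity},g} \le x_{\text{charity},g}$ for every $g \in [m]$ (removing good $m+1$ from the picture only removes mass from agents, which if anything increases charity on goods $[m]$, and in any case does not decrease it), so any witness $y \le \overline{x}_{\text{charity}}$ lifts to a witness $y \le x_{\text{charity}}$.

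There is no substantial obstacle here; the result is essentially a bookkeeping lemma ensuring that the fictional good is a harmless design device. The only point requiring a little care is the charity direction, where one must confirm that truncating the allocation does not shrink the fraction of any real good available in charity, so that a valid envy witness against $\overline{x}_{\text{charity}}$ is also a valid witness against $x_{\text{charity}}$; since $\overline{x}_{a,g} = x_{a,g}$ for all $a$ and all $g \in [m]$, in fact $\overline{x}_{\text{charity},g} = x_{\text{charity},g}$ for every real good $g$, so the two charity vectors agree on $[m]$ and the lifting is trivial. Assembling the three cases, every agent $a$ is envy-free towards every other agent and towards the charity in $\overline{x}$, which is exactly the $\FEF$ condition of Definition \ref{definition:fef}.
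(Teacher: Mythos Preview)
Your proposal is correct and follows essentially the same approach as the paper's proof: lift any potential envy witness $\overline{y}\in[0,1]^m$ to the $(m+1)$-good instance by appending a zero in the last coordinate, then invoke the $\FEF$ property of $x$ together with $v_a(x_a)=v_a(\overline{x}_a)$. Your version is slightly more explicit (you verify feasibility of $\overline{x}$ and note that $\overline{x}_{\text{charity},g}=x_{\text{charity},g}$ for $g\in[m]$), but the argument is the same.
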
 
The proof of the proposition is deferred to Appendix \ref{appendix:divisible-fef}.

\subsection{Density Domination implies Feasible Envy-Freeness}
As mentioned previously, considering the densities of the goods provides important insights for achieving envy-freeness under generalized assignment constraints. Recall that for any good $g$ and agent $a$, the density $\dens_a(g) \coloneqq \frac{\prof_a(g)}{\size_a(g)}$. 

In this subsection, we will define the density domination property (Definition \ref{definition:DD}) and prove that any fractional allocation that satisfies this property is feasibly envy-free. The idea of density domination and its connection to envy-freeness are novel contributions of this work. 

To define density domination, we will consider, for each agent $a \in [n]$, the density ordering $\pi_a: [m+1] \mapsto [m+1]$ across the goods. Specifically, $\pi_a(t)$ denotes the $t\Th$ most dense good according to $\dens_a(\cdot)$, for each index $1 \leq t \leq (m+1)$. If two goods have the same density for $a$, we break the ties according to the original indexing for goods. Note that the definition of the density ordering $\pi_a$ ensures that, for each index $1 \leq t \leq m$, exactly one of the following conditions hold 
\begin{itemize}
	\item $\dens_a(\pi_a(t)) > \dens_a(\pi_a(t+1))$.
	\item $\dens_a(\pi_a(t)) = \dens_a(\pi_a(t+1))$ and $\pi_a(t) < \pi_a(t+1)$.
\end{itemize}
Also, for each agent $a \in [n]$, we have $\pi_a(m+1) = m+1$. 

Next, we define sets that, for each agent $a \in [n]$ and threshold $\tau_a \in \mathbb{Z}_+$, denote the $\tau_a$ most dense goods for agent $a$.

\begin{definition}[Internal goods and Edge good] \label{definition:int-ext}
For any integer vector $\tau = (\tau_1, \tau_2, \ldots, \tau_n) \in \mathbb{Z}^n_+$, with $\| \tau \|_\infty \leq m+2$, and for any agent $a \in [n]$, the set of \emph{internal goods}, $I_a(\tau)$, is defined as the set of the $(\tau_a -1)$ most dense goods for agent $a$, i.e., 
\begin{align*}
I_a(\tau) \coloneqq 
\left\{ \pi_a(1),  \pi_a(2), \dots , \pi_a(\tau_a-1) \right\}. 
\end{align*}
In addition, for agent $a$, the \emph{edge good} set $E_a(\tau)$ is defined as 
\begin{align*}
E_a(\tau) \coloneqq 
\begin{cases}
\left\{ \pi_a(\tau_a) \right\} & \text{ if } \tau_a \leq m+1  \\
 \emptyset & \text{ otherwise, if } \tau_a = m +2.  \\
\end{cases}
\end{align*}
In addition, the sets $I(\tau) \coloneqq \cup_{a=1}^{n} I_a(\tau)$ and $E(\tau) \coloneqq \cup_{a=1}^{n} E_a(\tau)$ are called the set of internal and edge goods, respectively.
\end{definition}
Note that in this definition, $\tau_a = m+2$ denotes that, for agent $a \in [n]$, all the goods are internal ($I_a(\tau) = [m+1]$) and the edge set is empty ($E_a(\tau)=\emptyset$). Complementarily, if $\tau_a=1$, then the set of internal goods, $I_a(\tau)$, is empty.

We are now ready to define the density domination property.\footnote{Recall that, by convention, we use the term allocation to refer to a feasible allocation, i.e., one that satisfies the budget constraints of all the agents.}

\begin{definition}[Density Domination]\label{definition:DD}
An allocation $x=(x_1, \ldots, x_n)$ is said to satisfy the density domination property iff 
there exists an integer vector $\widehat{\tau} \in \mathbb{Z}^n_+$ (with $\| \widehat{\tau} \|_\infty \leq m+2$) such that for all agents $a,b \in[n]$ we have 
\begin{align*}
&x_{a,g} \geq x_{b,g} & \quad \text{ for all goods } g \in I_a(\widehat{\tau}), \\
&\sum_{g \in I_a(\widehat{\tau}) \cup E_a(\widehat{\tau})} x_{a,g} \ s_{a}(g)   = B_a,  & \quad \text{ and }\\
&\sum_{a=1}^{n} x_{a,g} = 1 & \quad \text{ for all goods } g \in I(\widehat{\tau}).
\end{align*}
\end{definition}

In this definition, the first set of inequalities assert that, if good $g$ is internal to an agent $a$, then the fraction of $g$ assigned to $a$ is at least as much as the good's fractional assignment to any other agent.  The second equation requires that under the allocation $x$, for every agent $a$, the budget constraint is satisfied with an equality. The final condition mandates that if a good $g$ is internal to any agent, then no fraction of $g$ is left unassigned, i.e., it is entirely divided among all the agents in allocation $x$.

It is also relevant to note that the first set of inequalities in the definition of density domination ensure that, if a good $g$ is internal to two agents $\overline{a}$ and $\overline{b}$, then the fractions of $g$ that $\overline{a}$ and $\overline{b}$ receive must be exactly equal, $x_{\overline{a}, g} = x_{\overline{b},g}$ Further, we note that in an allocation that satisfies the density domination property, each agent $a$ is allocated fractions of only the top $\widehat{\tau}_a$ densest goods according to her, i.e., $I_a(\widehat{\tau}) \subseteq \supp(x_a) \subseteq I_a(\widehat{\tau}) \cup E_a(\widehat{\tau})$. This follows from the second set of constraints in \cref{definition:DD}.
Hence, we also have $s_a(x_a) = B_a$. 

We next establish a crucial result about density domination.
\begin{lemma}
\label{lem:density-dom}
Any allocation $x=(x_1, \ldots, x_n)$ that satisfies the density domination property is \FEF.
\end{lemma}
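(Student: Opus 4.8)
The plan is to show that the density domination property implies each agent $a$ is envy-free towards every other agent $b$ and towards the charity, thereby establishing $\FEF$ via \cref{definition:fef}. The central intuition is that an agent $a$'s fractional bundle $x_a$ is supported entirely on $a$'s $\widehat{\tau}_a$ densest goods, and moreover $a$ receives the lion's share of each of its internal goods. So whenever $a$ considers a feasible subset $y$ within another agent's bundle (or within charity), $a$ can only do at least as well by reallocating that same amount of ``size budget'' toward its own high-density goods.

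\medskip

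\noindent
\textbf{Step 1: Reduce to comparing against the residual.} Fix an agent $a$ and a target agent $b$ (the charity case is analogous, replacing $x_b$ by $x_{\text{charity}}$). Let $y \leq x_b$ be any feasible assignment, so $s_a(y) \leq B_a$. I would decompose $y$ according to whether its goods are internal to $a$ or not. For a good $g \in I_a(\widehat{\tau})$, the density-domination inequality $x_{a,g} \geq x_{b,g} \geq y_g$ shows $a$ already receives at least as much of $g$ as appears in $y$; hence the only ``threatening'' part of $y$ is its mass on goods outside $I_a(\widehat{\tau})$, i.e., on goods of density at most $\dens_a(\pi_a(\widehat{\tau}_a))$.

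\medskip

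\noindent
\textbf{Step 2: The density/budget exchange argument.} The key quantitative step compares values through sizes and densities. Since $s_a(x_a) = B_a \geq s_a(y)$, agent $a$'s bundle uses at least as much size budget as $y$ does. The value of any assignment $z$ decomposes as $v_a(z) = \sum_g z_g\, s_a(g)\, \dens_a(g)$, i.e., value equals size-weighted density. Because $x_a$ is supported on $a$'s top $\widehat{\tau}_a$ densest goods — each of density at least $\dens_a(\pi_a(\widehat{\tau}_a))$ — while every unit of $y$'s ``excess'' mass (the part not already dominated in Step 1) sits on goods of density at most $\dens_a(\pi_a(\widehat{\tau}_a))$, the size budget that $x_a$ spends is spent on weakly-denser goods than the size budget $y$ spends. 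A careful bookkeeping of the common internal part (which cancels, using $x_{a,g} \geq x_{b,g}$) plus this density comparison on the remaining budget yields $v_a(x_a) \geq v_a(y)$. The edge good $\pi_a(\widehat{\tau}_a)$ must be handled as the ``marginal'' density at which budget is exchanged, which is precisely why the density domination definition singles it out via $E_a(\widehat{\tau})$.

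\medskip

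\noindent
\textbf{The main obstacle} I anticipate is making the exchange argument of Step 2 fully rigorous when $y$ shares internal goods with $a$. One must subtract the common overlap correctly: the portion of $y$ on internal goods is already weakly below $x_a$'s corresponding entries, so it should be ``matched off'' against the same entries in $x_a$, leaving a residual of $x_a$ (supported on goods of density $\geq \dens_a(\pi_a(\widehat{\tau}_a))$) versus a residual of $y$ (supported on goods of density $\leq \dens_a(\pi_a(\widehat{\tau}_a))$) of equal-or-smaller total size. Establishing that this residual comparison gives the value inequality — handling the edge good's density as the pivot and confirming the size inequality $s_a(\text{residual of } x_a) \geq s_a(\text{residual of } y)$ follows from $s_a(x_a) = B_a \geq s_a(y)$ — is the delicate accounting that the formal proof will need to spell out.
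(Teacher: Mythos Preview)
Your proposal is correct and follows essentially the same route as the paper's proof: the paper formalizes your ``matching off'' via the vector operations $x_a \cap y$, $x_a - y$, $y - x_a$, derives $s_a(x_a - y) \geq s_a(y - x_a)$ from $s_a(x_a) = B_a \geq s_a(y)$, and then applies the density pivot at $\dens_a(\pi_a(\widehat{\tau}_a))$ exactly as you describe. The only point you leave implicit is that the charity case needs the third density-domination condition (internal goods are fully allocated), which is what gives $x_{\text{charity},g} = 0 \leq x_{a,g}$ for all $g \in I_a(\widehat{\tau})$ and makes the analogy go through.
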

\begin{proof}
To show that the given allocation $x$ is $\FEF$---i.e., it satisfies Definition \ref{definition:fef}---consider any two agents $a, b \in [n]$ and any fractional assignment $y \leq x_b$ with the property that $s_a(y) \leq B_a$. 

To prove that $v_a(x_a) \geq v_a(y)$, we consider the fractional assignments $(y-x_a) \in [0,1]^{m+1}$ and $(x_a - y) \in [0,1]^{m+1}$; recall the vector operations detailed in Section \ref{sec:notation}. In addition, write $\widehat{\tau}=(\widehat{\tau}_1, \ldots, \widehat{\tau}_n)$ to denote the integer vector that certifies the density domination of $x$. Note that the density domination property implies that, for all internal goods $g \in I_a(\widehat{\tau})$, we have $x_{a,g} \geq x_{b,g} \geq y_g$. Using this inequality and the definition of $I_a(\widehat{\tau})$, we obtain that all the goods in the set $\supp(y-x_a) = \left\{ g' \in [m+1] \ : \ y_{g'} > x_{a, g'} \right\} $ have density \emph{at most} $\dens_a(\pi_a(\widehat{\tau}_a))$. 

As mentioned previously, the density domination property also mandates that all the goods in the set $\supp(x_a)$ have density at least $\dens_a(\pi_a(\widehat{\tau}_a))$. Since $\supp(x_a - y) \subseteq \supp(x_a)$, the density of every good in $\supp(x_a - y)$ is \emph{at least} $\dens_a(\pi_a(\widehat{\tau}_a))$.

Note that $x_a = (x_a - y) + (x_a \cap y)$ and $y = (y - x_a) + (x_a \cap y)$. These equations lead to the following size bound 
\begin{align*}
s_a(x_a - y) + s_a(x_a \cap y) & = s_a(x_a) \\
& = B_a \tag{via density domination}  \\
& \geq s_a(y) \tag{via feasibility of $y$}  \\
& = s_a(y - x_a) + s_a(x_a \cap y).
\end{align*}
The last inequality reduces to $s_a(x_a - y) \geq s_a(y - x_a)$. These observations establish that agent $a$ values its bundle at least as much as the fractional assignment $y$: 
\begin{align*}
v_a(x_a)  & = v_a(x_a \cap y)  + v_a(x_a - y)  \\
& \geq v_a(x_a \cap y)  + \dens_a(\pi_a(\widehat{\tau}_a)) \ s_a(x_a - y) \\
& \geq v_a(x_a \cap y) + \dens_a(\pi_a(\widehat{\tau}_a)) \ s_a(y - x_a) \\
& \geq v_a(x_a \cap y) + v_a(y - x_a) \\
& = v_a(y).
\end{align*}
Therefore, we get that, in the density dominating allocation $x$, every agent $a \in [n]$ is envy-free towards all other agents (see Definition \ref{definition:fef}). 

We now prove that no agent envies the charity. Let $x_{\charity,g}$ denote the fraction of good $g$ allocated to the charity, i.e., $x_{\charity, g} = 1 - \sum_{a=1}^n x_{a, g}$. Fix any agent $a \in [n]$ and any good $\widehat{g} \in I_a(\widehat{\tau})$. Note that the third set of equations in the density domination property (see Definition \ref{definition:DD}) implies that, for all internal goods $\widehat{g} \in I_a(\widehat{\tau})$, we have $x_{a,\widehat{g}} \ge x_{\charity, \widehat{g}}=0$. 
Therefore, by arguments similar to those above, we obtain that $a$ does not envy the charity. The lemma stands proved. 
\end{proof}
It is relevant to note that Lemma \ref{lem:density-dom} and Proposition \ref{proposition:no-cons} imply that from a density dominating allocation $x$ (and by removing the $(m+1)$th good from consideration), one obtains an $\FEF$ allocation for the underlying instance.

\subsection{$\FEF$ Algorithm}
To capture the density domination property and develop our algorithm, we first define linear programs, $LP_1(\cdot)$, that are parameterized by integer vectors $\tau \in \mathbb{Z}^n_+$.
\begin{definition}
Given any integer vector $\tau =(\tau_1, \ldots, \tau_n) \in \mathbb{Z}_+^n$ (with $\| \tau \|_\infty \leq m+2$), we define the following linear program, $LP_1(\tau)$, over decision variables $\left\{z_{a, g} \in [0,1] \right\}_{a, g}$: 
\begin{align*}
&z_{a,g} \geq z_{b,g}  & \quad \text{for all } a,b \in [n] \text{ and } g \in I_a(\tau) \\
&\sum_{g \in I_a(\tau) \cup E_a(\tau)} z_{a, g} \  s_{a}(g) = B_a & \quad \text{for all } a\in [n]\\
&\sum_{a=1}^{n} z_{a,g} = 1 & \quad \text{for all } g \in I(\tau) \\
&z_{a,h} = 0 & \quad \text{for all } a \in [n] \text{ and } h \notin \left( I_a(\tau) \cup E_a(\tau) \right) \\
&\sum_{a=1}^n z_{a,h} \leq 1 & \quad \text{for all } h \in [m+1] \setminus  I(\tau).
\end{align*}
\end{definition}

In addition to the requirements from density domination, the linear program $LP_1(\tau)$ contains a fourth and fifth set of constraints. These additional constraints ensure that the $z_{a,g}$-s induce an allocation. Also, we note that, the fourth set of constraints are redundant in $LP_1(\cdot)$, since they follow from the second set of constraints. Nonetheless, the fourth set of constraints will become relevant in the relaxation mentioned below.

Moreover, as stated in the following proposition, the feasibility of $LP_1(\cdot)$ implies existence of density dominating allocations; the proof of the proposition is direct and, hence, omitted. 

\begin{proposition}\label{proposition:feasible-ef}
If for an integer vector $\tau \in \mathbb{Z}_+^n$ the linear program $LP_1(\tau)$ is feasible, then a feasible solution $\left\{z_{a, g} \in [0,1] \right\}_{a, g}$ corresponds to an allocation $z \in [0,1]^{n \times (m+1)}$ that satisfies the density domination property.  
\end{proposition}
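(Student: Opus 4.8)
The plan is to take an arbitrary feasible solution $\{z_{a,g}\}_{a,g}$ of $LP_1(\tau)$, regard it as the tuple $z = (z_1, \ldots, z_n) \in [0,1]^{n \times (m+1)}$, and verify two things: first, that $z$ is a (feasible) allocation in the sense of Section \ref{sec:notation}, and second, that it satisfies the density domination property of Definition \ref{definition:DD} with the certifying vector $\widehat{\tau} \coloneqq \tau$. Since the first three families of constraints in $LP_1(\tau)$ are \emph{verbatim} the three conditions defining density domination, the entire content of the argument is to confirm that the remaining two families of constraints are exactly what is needed to make $z$ a valid allocation.

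First I would check that $z$ is a legitimate fractional allocation, i.e., that at most one unit of each good is distributed. Fix a good $g \in [m+1]$. If $g \in I(\tau)$, the third set of constraints gives $\sum_{a=1}^n z_{a,g} = 1 \leq 1$; if $g \in [m+1] \setminus I(\tau)$, the fifth set of constraints gives $\sum_{a=1}^n z_{a,g} \leq 1$ directly. Hence $\sum_{a=1}^n z_{a,g} \leq 1$ for every good, as required. Next I would check budget feasibility for each agent $a \in [n]$. The fourth set of constraints forces $z_{a,h} = 0$ for every $h \notin I_a(\tau) \cup E_a(\tau)$, so the support of $z_a$ lies inside $I_a(\tau) \cup E_a(\tau)$ and therefore
\begin{align*}
s_a(z_a) = \sum_{g \in [m+1]} z_{a,g}\, s_a(g) = \sum_{g \in I_a(\tau) \cup E_a(\tau)} z_{a,g}\, s_a(g) = B_a,
\end{align*}
where the final equality is the second set of constraints. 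Thus $s_a(z_a) = B_a \leq B_a$, so every budget constraint is met (indeed with equality), and $z$ is a feasible allocation.

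It then remains to match $z$ against the three conditions of Definition \ref{definition:DD} with $\widehat{\tau} = \tau$. The inequalities $z_{a,g} \geq z_{b,g}$ for all $a,b \in [n]$ and $g \in I_a(\tau)$ are precisely the first set of $LP_1(\tau)$ constraints, giving the first density-domination condition. The equalities $\sum_{g \in I_a(\tau) \cup E_a(\tau)} z_{a,g}\, s_a(g) = B_a$ are the second set of $LP_1(\tau)$ constraints, giving the second condition (equivalently, the budget-tightness established above). Finally, $\sum_{a=1}^n z_{a,g} = 1$ for every $g \in I(\tau)$ is the third set of $LP_1(\tau)$ constraints, giving the third condition. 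Since $\|\tau\|_\infty \leq m+2$ by hypothesis, $\widehat{\tau} = \tau$ is an admissible certifying vector, and all requirements of Definition \ref{definition:DD} hold.

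There is no genuine obstacle here; the only point warranting a moment's care is recognizing the dual role of the fourth and fifth constraint families. These do not appear in the density-domination definition, yet they are exactly what upgrades a density-dominating \emph{set of conditions} into a bona fide allocation: the fourth constraint confines each $z_a$ to the internal-and-edge goods of $a$ (so that the budget sum in the second constraint accounts for the entire size $s_a(z_a)$), and the fifth constraint supplies the at-most-one-unit guarantee for goods outside $I(\tau)$, complementing the exact-one-unit guarantee that the third constraint already provides on $I(\tau)$.
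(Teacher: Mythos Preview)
Your proposal is correct and is precisely the direct verification the paper has in mind; the paper itself omits the proof, stating only that it ``is direct and, hence, omitted.'' Your observation about the role of the fourth and fifth constraint families---ensuring that the $z_{a,g}$'s form a bona fide feasible allocation while the first three families reproduce Definition~\ref{definition:DD} verbatim---is exactly the point.
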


Note that, a priori, it is not clear that $LP_1(\tau)$ is feasible for any integer vector $\tau \in \mathbb{Z}^n_+$. However, if there exists a $\tau$ that induces a feasible $LP_1(\tau)$, then, by \cref{proposition:feasible-ef} and \cref{lem:density-dom}, we will obtain the desired existential guarantee for $\FEF$. The rest of the section is dedicated to showing that such a $\tau$ indeed exists and, moreover, it can be computed in polynomial time.

As a first step, we formulate a new family of linear  programs, $LP_2(\tau)$ (again parameterized by integer vectors), by relaxing the second set of constraints of $LP_1(\tau)$, i.e., we relax the requirement that for every agent the budget constraint holds with an equality.
\begin{definition}
Given any integer vector $\tau =(\tau_1, \ldots, \tau_n) \in \mathbb{Z}_+^n$ (with $\| \tau \|_\infty \leq m+2$), we define the following linear program, $LP_2(\tau)$, over decision variables $\left\{z_{a, g} \in [0,1] \right\}_{a, g}$: 
\begin{align*}
&(C_\textcal{1}) \qquad \qquad  z_{a,g} \geq z_{b,g}  & \quad \text{for all } a,b \in [n] \text{ and } g \in I_a(\tau) \\
&(C_\textcal{2}) \qquad \qquad \sum_{g \in I_a(\tau) \cup E_a(\tau)} z_{a, g} \  s_{a}(g) \leq  B_a & \quad \text{for all } a\in [n]\\
&(C_\textcal{3}) \qquad \qquad \sum_{a=1}^{n} z_{a,g} = 1 & \quad \text{for all } g \in I(\tau) \\
&(C_\textcal{4}) \qquad \qquad z_{a,h} = 0 & \quad \text{for all } a \in [n] \text{ and } h \in [m+1] \setminus \left( I_a(\tau) \cup E_a(\tau) \right) \\
&(C_\textcal{5}) \qquad \qquad \sum_{a=1}^n z_{a,h} \leq 1 & \quad \text{for all } h \in [m+1] \setminus  I(\tau).
\end{align*}
\end{definition}

Now, we prove an important lemma of the section, which establishes an incongruity between the linear programs. 
\begin{lemma}
\label{lem:progress}
Let $\tau \in \mathbb{Z}^n_+$ be an integer vector with $\| \tau \|_\infty \leq m+1$. If linear program $LP_2(\tau)$ is feasible and $LP_1(\tau)$ is infeasible, then there exists an agent $k \in [n]$ such that $LP_2 \left(\tau + e_k \right)$ is feasible.
\end{lemma}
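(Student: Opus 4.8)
The plan is to recast the feasibility of $LP_1(\tau)$ as an optimization over $LP_2(\tau)$ and then to localize the effect of incrementing a threshold to a single good. Concretely, every feasible point $z$ of $LP_2(\tau)$ satisfies $s_a(z_a) \le B_a$ for all $a$ (by constraints $C_\textcal{2}$ and $C_\textcal{4}$), so the potential $\Phi(z) := \sum_{a=1}^n s_a(z_a)$ is bounded above by $\sum_a B_a$, with equality exactly when all budget constraints are tight. Hence $LP_1(\tau)$ is feasible if and only if $V(\tau) := \max \{\Phi(z) : z \text{ feasible for } LP_2(\tau)\}$ equals $\sum_a B_a$. Since $LP_2(\tau)$ is feasible but $LP_1(\tau)$ is not, any $\Phi$-maximizer $z$ of $LP_2(\tau)$ must leave some agent slack, i.e.\ there is an agent $k$ with $s_k(z_k) < B_k$. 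I would fix such a maximizer $z$ and a slack agent $k$, and set $g^* := \pi_k(\tau_k)$, the current edge good of $k$.

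The second step is to observe that passing from $\tau$ to $\tau + e_k$ alters the constraint system only through the column of $g^*$. For every agent $a \neq k$ the sets $I_a$ and $E_a$ are unchanged, while for $k$ the good $g^*$ moves from $E_k$ into $I_k$ and a fresh edge good $g^{**} := \pi_k(\tau_k+1)$ is unlocked. Comparing the two programs constraint-by-constraint, the only requirements $z$ might now violate are (i) the full-allocation constraint $\sum_a z_{a,g^*} = 1$ (formerly only $C_\textcal{5}$, i.e.\ $\le 1$) and (ii) the dominance constraint $z_{k,g^*} \ge z_{b,g^*}$ for all $b$; every other constraint of $LP_2(\tau+e_k)$ is either inherited verbatim from $LP_2(\tau)$ or strictly relaxed (the newly admissible good $g^{**}$ can be kept at $0$). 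Thus it suffices to re-solve the single column of $g^*$ so that it becomes fully allocated and $k$-dominant while keeping $s_k(z_k) \le B_k$. Here slackness of $k$ together with maximality of $\Phi$ settles (i) for free: if $\sum_a z_{a,g^*} < 1$, then since $g^* \in E_k(\tau)$ is admissible for $k$ and $k$ has budget room, raising $z_{k,g^*}$ by a small $\eps$ stays feasible for $LP_2(\tau)$ and strictly increases $\Phi$, contradicting maximality; hence $g^*$ is already fully allocated.

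What remains, and what I expect to be the crux, is enforcing the dominance (ii) without violating $k$'s budget. Naively raising $z_{k,g^*}$ to the top level while pushing the other agents' shares of $g^*$ down increases $k$'s size consumption, which may exceed $k$'s slack; moreover, if $g^* \in I_b(\tau)$ for some $b$, that agent's own dominance constraint forces $z_{b,g^*} = z_{k,g^*}$, so several agents must sit at the common top value at once. The right way to pay for this is to free budget at $k$ by offloading $k$'s internal goods onto agents with spare budget: reduce $z_{k,h}$ for some $h \in I_k(\tau)$ with $z_{k,h} > z_{b,h}$ (so dominance survives) and absorb the mass at a slack agent $b$, chaining such exchanges as needed. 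This is precisely a routing/flow feasibility question on an exchange graph among the agents, whose total available slack is $\sum_a B_a - V(\tau) > 0$ by infeasibility of $LP_1(\tau)$. The plan is to show the routing succeeds for at least one slack agent $k$: either exhibit the augmenting exchange directly, or argue by contradiction via a min-cut/duality certificate---if no slack agent could be made $g^*$-dominant within budget, the saturated cut would force every budget tight, i.e.\ $V(\tau) = \sum_a B_a$, contradicting that $LP_1(\tau)$ is infeasible. Establishing this routing (equivalently, the correct choice of $k$) is the main obstacle. The boundary case $\tau_k = m+1$, where $g^* = m+1$ is the fictional good and $E_k(\tau+e_k) = \emptyset$, should be checked separately but is benign, since that good has density $0$ and size $2n\max_b B_b$.
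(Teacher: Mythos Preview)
Your overall framing---optimize over the feasible polytope of $LP_2(\tau)$, locate a slack agent, and show that this agent's edge good is already fully allocated---matches the paper's. The divergence, and the gap, is in how you handle the dominance requirement (your step (ii)).

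You maximize $\Phi(z)=\sum_a s_a(z_a)$. The paper instead maximizes the sum of edge-good fractions, $\sum_a z_{a,\pi_a(\tau_a)}$. With that objective the dominance step needs no routing at all: let $b$ be any slack agent and $\widehat g=\pi_b(\tau_b)$; let $M=\argmax_a z^*_{a,\widehat g}$ and $F_e=\{a:\widehat g\in E_a(\tau)\}$. If $F_e\cap M=\emptyset$, then $M$ consists exactly of the agents for whom $\widehat g$ is internal, and one can shift $\epsilon$ of $\widehat g$ from those agents uniformly to $b$. This preserves every constraint of $LP_2(\tau)$ and strictly increases the chosen objective (since $b\in F_e$), a contradiction. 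Hence some $k\in F_e\cap M$ exists, and for this $k$ the very same $z^*$ is already feasible for $LP_2(\tau+e_k)$---no modification, no exchange graph.

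Two points about your plan are worth flagging. First, the shift above does not in general increase $\Phi$: its effect on $\Phi$ is $\epsilon\, s_b(\widehat g)-\frac{\epsilon}{|F_i|}\sum_{\ell\in F_i}s_\ell(\widehat g)$, which can be negative when $\widehat g$ is larger for the internal agents than for $b$. So $\Phi$-maximality by itself does not force an edge-good agent to hold the top share of $\widehat g$, and you are genuinely pushed into the routing argument you sketched. Second, you restrict the candidate $k$ to slack agents, but the paper's $k$ need not be slack; it is simply an agent that shares the same edge good $\widehat g$ with the slack agent $b$ and already holds the maximum fraction of it. Your duality sketch (``if no slack agent could be made $g^*$-dominant then all budgets are tight'') is not substantiated: the obstruction to dominance is the web of $C_\textcal{1}$ constraints across all internal goods, and positive total slack $\sum_a B_a-V(\tau)>0$ does not by itself certify that this web can be rerouted to free enough budget at any particular slack $k$. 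Switching the objective to $\sum_a z_{a,\pi_a(\tau_a)}$ eliminates this obstacle entirely.
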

\begin{proof}
For vector $\tau \in \mathbb{Z}^n_+$, given that the program $LP_2(\tau)$ is feasible, we consider among its feasible solutions one that maximizes $\sum_{a=1}^{n} z_{a,\pi_a(\tau_a)}$. Write $z^* = \left\{z^*_{a,g} \right\}_{a,g}$ to denote such a feasible solution, i.e., among all feasible solutions (of  $LP_2(\tau)$), the solution $z^*$ maximizes the fractional assignment across the edge goods $\{ \pi_a(\tau_a) \}_a$. 

In addition, consider an agent $b$ for whom the budget constraint holds with a strict inequality, i.e., for whom $\sum_{g \in I_b(\tau) \cup E_b(\tau)} \  z^*_{b, g} \  s_{b}(g) <  B_b$. Note that such an agent $b$ necessarily exists, otherwise $LP_1(\tau)$ would be feasible. Write $\widehat{g} \coloneqq \pi_b({\tau}_b)$ to denote the edge good of agent $b$, i.e., $E_b(\tau) = \{ \ \widehat{g} \ \}$. 
First, we claim that
\begin{align}
    \sum_{a=1}^{n} z^*_{a, \widehat{g}} = 1\label{eq:bound-good-full}
\end{align}
Say, towards a contradiction, that $\sum_{a=1}^{n} z^*_{a, \widehat{g}} < 1$. This strict inequality and the feasibility of $z^*$ implies that $\widehat{g}$ is not an  internal good for any agent, i.e., $\widehat{g} \notin I(\tau)$. Hence, $\widehat{g}$ does not participate in $C_\textcal{1}$ (the first set of constraints) in $LP_2(\tau)$. In such a case, we can increment $z^*_{b, \widehat{g}}$ by a sufficiently small $\epsilon >0$, while maintaining feasibility and, in particular, satisfying the budget constraint of agent $b$. Such an update, however, increases the objective function $\sum_{a=1}^n z_{a, \pi_a(\tau_a)}$ and, hence, contradicts the fact that $z^*$ maximizes this objective function. Therefore, equation (\ref{eq:bound-good-full}) holds for good $\widehat{g}$. 

Now, let $N$ denote the set of agents that have received a nonzero fraction of good $\widehat{g}$ under $z^*$, i.e., $N \coloneqq \{ a \in [n] \ : \ z^*_{a, \widehat{g}} >0 \}$.  Equation (\ref{eq:bound-good-full}) ensures that $N \neq \emptyset$.  In addition, let $M$ denote the set of agents who have received the maximum fraction of $\widehat{g}$, i.e., $M \coloneqq  \argmax_a z^*_{a, \widehat{g}}$. Note that $M \subseteq N$. 

Also, let $F_e$ and $F_i$ be the set of agents for whom $\widehat{g}$ is an edge good and internal good, respectively, $F_e \coloneqq \left\{ a \in [n] : \widehat{g} \in E_a(\tau)\right\}$ and $F_i \coloneqq \left\{ a \in [n] : \widehat{g} \in I_a(\tau) \right\}$. Indeed, the above-identified agent $b$ is necessarily contained in $ F_e$, and it is possible that there exists an agent $a \in F_e$ with $z^*_{a, \widehat{g}} = 0$, i.e., we can have $(F_e \setminus N) \neq \emptyset$. By contrast, the $C_\textcal{1}$ constraints in $LP_2(\tau)$ ensure that, for every agent $\ell$ for whom $\widehat{g}$ is an internal good, it holds that $z^*_{\ell, \widehat{g}} = \max_{a \in [n]} \ z^*_{a, \widehat{g}} >0$, i.e., 
\begin{align}
F_i \subseteq M \subseteq N \label{eq:containMN}
\end{align}
We obtain another useful containment via the $C_\textcal{4}$ constraints in $LP_2(\tau)$: for $z^*$ and  all agents $a \in N$, either $\widehat{g} \in E_a(\tau)$ or $\widehat{g} \in I_a(\tau)$. Hence, 
\begin{align}
N \subseteq F_i \cup F_e \label{eq:containFie}
\end{align}

Building on the above-mentioned observation, we will next show (in Claim \ref{claim:non-empty} below) that the two sets $F_e$ and $M$ must intersect; the claim essentially follows from the optimality of $z^*$. Using this claim, we will then complete the proof of the lemma. 

\begin{claim}\label{claim:non-empty}
$F_e \cap M \neq \emptyset$. 
\end{claim} 
\begin{proof}
Towards a contradiction, assume that $F_e \cap M = \emptyset$. Then, containments (\ref{eq:containMN}) and (\ref{eq:containFie}) lead to the following equality $M = F_i$. This equality shows that $F_i \neq \emptyset$ and
\begin{align}
z^*_{\ell, \widehat{g}} > z^*_{p, \widehat{g}} & \quad \text{for each $\ell \in F_i$ and any $p \notin F_i$} \label{ineq:gap}
\end{align} 
Relying on this strict inequality, we can select a sufficiently small, but positive, $\epsilon >0$ and update $z^*$ to obtain another feasible solution $z'$ as follows: set $z'_{\ell, \widehat{g}} = z^*_{\ell, \widehat{g}} - \frac{\epsilon}{|F_i|}$, for all agents $\ell \in F_i$, and $z'_{b, \widehat{g}} = z^*_{b, \widehat{g}} + \epsilon$. For all other agent-good pairs, the fractional assignment in $z'$ is the same as in $z^*$.  

We now show that the solution $z'$ is feasible for $LP_2(\tau)$: recall that for agent $b$, under $z^*$, the budget constraint was not tight. Hence, increasing $z^*_{b, \widehat{g}}$ by a sufficiently small $\epsilon$ (as in $z'$) maintains feasibility with respect to the $C_\textcal{2}$ constraints. Good $\widehat{g}$ continues to be fully assigned among the agents, since we have cumulatively reduced the fractional assignments of $\widehat{g}$ among the agents in $F_i$ by $\epsilon$ and increased the assignment to agent $b$ by $\epsilon$. Therefore, the $C_\textcal{3}$ constraints continue to hold for $z'$. Furthermore, the $C_\textcal{4}$ and $C_\textcal{5}$ constraints in $LP_2(\tau)$ also hold for $z'$. Moreover, the strict inequality (\ref{ineq:gap}) implies that there exists an appropriately small $\epsilon >0$ such that even after uniformly decrementing $z^*_{\ell, \widehat{g}}$, for agents $\ell \in F_i = M$, the $C_\textcal{1}$ constraints in $LP_2(\tau)$ are maintained. Therefore, $z'$ satisfies all the constraints in $LP_2(\tau)$. 

Note, however, that the objective function value $\sum_{a=1}^n z_{a, \pi_a(\tau_a)}$ of $z'$ is strictly greater than that of $z^*$; recall that $\pi_b(\tau_b) = \widehat{g}$. This contradicts the optimality of $z^*$. Hence, by way of contradiction, we obtain the stated claim, $F_e \cap M \neq \emptyset$.
\end{proof}

We will now complete the proof of the lemma using Claim \ref{claim:non-empty}. In particular, we will show that, for any agent $k \in F_e \cap M$, the program $LP_2(\boundvec+e_k)$ is feasible; in fact, $z^*$ itself is a feasible solution for $LP_2(\boundvec+e_k)$.

Fix any agent $k \in F_e \cap M$. Since $k \in F_e$, we have $I_k(\tau + e_k) = I_k(\tau) + \widehat{g}$. Additionally, $k \in M = \argmax_a z^*_{a, \widehat{g}}$ and, hence, solution $z^*$ satisfies the $C_\textcal{1}$ constraints in $LP_2(\tau + e_k)$, for agent $k$ and other relevant agents as well. The $C_\textcal{2}$ constraints in $LP_2(\tau + e_k)$ are the same as in $LP_2(\tau)$, hence, $z^*$ continues to be feasible with respect to these budget constraints. In addition, equation (\ref{eq:bound-good-full}) enforces the $C_\textcal{3}$ constraints for good $\widehat{g}$. The $C_\textcal{3}$ constraints hold for all the other goods in $I(\tau + e_k)$ -- this follows from the facts that $I(\tau + e_k) = I(\tau) \cup \{ \widehat{g} \}$ and $z^*$ is a feasible solution with respect to $I(\tau)$. Finally, using the containments that $I_a(\tau+ e_k) \cup E_a(\tau + e_k) \supseteq I_a(\tau) \cup E_a(\tau)$, for all agents $a$, we obtain that the $C_\textcal{4}$ and $C_\textcal{5}$ constraints are satisfied by $z^*$ in $LP_2(\tau+e_k)$ as well. 

Overall, we obtain that $LP_2(\tau + e_k)$ is feasible and the lemma stands proved.
\end{proof}

The following lemma shows that $LP_2(\cdot)$ cannot be incessantly feasible. 

\begin{lemma}
\label{lem:divisible-terminate}
For any integer vector $\tau \in \mathbb{Z}_+^n$, with $\| \tau \|_\infty=m+2$, the program $LP_2(\tau)$ is infeasible.
\end{lemma}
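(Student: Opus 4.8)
The plan is to exploit the fact that $\|\tau\|_\infty = m+2$ forces some agent to regard \emph{every} good as internal, and then to show that the deliberately oversized fictional good $m+1$ makes that agent's budget constraint unsatisfiable. Concretely, fix an agent $k \in [n]$ with $\tau_k = m+2$; such an agent exists by hypothesis. By Definition \ref{definition:int-ext}, this means $I_k(\tau) = [m+1]$ and $E_k(\tau) = \emptyset$, so all $m+1$ goods (including the fictional one) are internal to $k$, and hence $I(\tau) = [m+1]$. I would argue by contradiction: suppose $z = \{z_{a,g}\}$ is a feasible solution of $LP_2(\tau)$.

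Next I would extract a lower bound on the fraction of good $m+1$ assigned to $k$. The domination constraints $(C_\textcal{1})$, applied with $a = k$ (valid since every good is internal to $k$), give $z_{k,g} \geq z_{b,g}$ for every good $g$ and every agent $b$; thus $z_{k,g} = \max_b z_{b,g}$. The covering constraints $(C_\textcal{3})$ apply to every good (as $I(\tau) = [m+1]$) and give $\sum_{b} z_{b,g} = 1$. Combining the two yields $z_{k,g} \geq \frac1n \sum_b z_{b,g} = \frac1n$ for every good; in particular $z_{k,m+1} \geq \frac1n$.

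Finally I would derive the contradiction from the budget constraint $(C_\textcal{2})$ for agent $k$, namely $\sum_{g \in [m+1]} z_{k,g}\, s_k(g) \leq B_k$. Recalling that the fictional good has size $s_k(m+1) = 2n \max_b B_b$, the single term contributed by good $m+1$ is already $z_{k,m+1}\, s_k(m+1) \geq \frac1n \cdot 2n \max_b B_b = 2 \max_b B_b$. Since every other term is nonnegative, the constraint would force $B_k \geq 2\max_b B_b \geq 2 B_k$, i.e.\ $B_k \leq 0$, contradicting that budgets are positive. Hence no feasible $z$ exists and $LP_2(\tau)$ is infeasible. There is no serious obstacle here: the entire argument hinges on the single observation that combining the domination and covering constraints pins agent $k$'s share of \emph{every} good at $\geq 1/n$, and that a $1/n$ share of the oversized good $m+1$ alone already exceeds any agent's budget. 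The one point to state carefully is that $\tau_k = m+2$ is precisely the regime in which good $m+1$ becomes internal to $k$ (rather than an edge good that could be dropped from the budget sum), which is what lets constraint $(C_\textcal{3})$ be invoked for it.
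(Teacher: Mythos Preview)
Your proof is correct and follows essentially the same idea as the paper's: once $\tau_k = m+2$, good $m+1$ becomes internal (hence must be fully assigned by $C_\textcal{3}$), yet its size $2n\max_b B_b$ is too large for the budget constraints $C_\textcal{2}$ to accommodate. The only cosmetic difference is that the paper derives the contradiction by (implicitly) aggregating over all agents' budgets, whereas you invoke the domination constraints $C_\textcal{1}$ to pin the contradiction on the single agent $k$ via $z_{k,m+1}\ge 1/n$; both routes are equally direct.
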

\begin{proof}
Given that, for the given vector $\tau\in \mathbb{Z}_+^n$, one of the components is equal to $m+2$, we have that the good ${m+1} \in I(\boundvec)$; see Definition \ref{definition:int-ext}. 

Now, for $LP_2(\tau)$ to be feasible, the good $m+1$ must be fully assigned among the agents; see the $C_\textcal{3}$ constraints in $LP_2(\tau)$. However, since $\size_a({m+1}) = 2n \ \max_b B_b$, for all agents $a \in [n]$, such an assignment is not possible while maintaining the budget constraints $C_\textcal{2}$ of the agents. The lemma stands proved. 
\end{proof}

With \cref{lem:density-dom,lem:progress,lem:divisible-terminate} in hand, we now state our algorithm for computing \FEF{} allocations in polynomial time.
\begin{algorithm}
\caption{\DivisibleEF 
}  \label{algo:fracEF}
\textbf{Input:} Fair division instance $\langle [n], [m], \{v_a(g) \}_{a, g}, \{s_a(g)\}_{a,g}, \{B_a\}_a \rangle$ with divisible goods and generalized assignment constraints.  \\ 
\textbf{Output:} An $\FEF$ allocation. 
\begin{algorithmic}[1]
\STATE Initialize $n$-dimensional integer vector $\boundvec \leftarrow (1,1, \ldots, 1)$.
\WHILE{$LP_1(\boundvec)$ is infeasible}
    \STATE Find $k \in[n]$ such that $LP_2(\boundvec+e_k)$ is feasible. \Comment{Such an agent $k$ always exists (\cref{lem:progress}).}
    \STATE Update $\boundvec\leftarrow \boundvec+e_k$.
\ENDWHILE
\RETURN Allocation $x=(x_1, x_2, \ldots, x_n)$ corresponding to a feasible solution of $LP_1(\boundvec)$.
\end{algorithmic}
\end{algorithm}

\begin{theorem}
\label{thm:divisible-ef}
For any given fair division instance with divisible goods and generalized assignment constraints, \DivisibleEF{} computes an \FEF{} allocation in polynomial time. 
\end{theorem}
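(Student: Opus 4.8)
The plan is to analyze \DivisibleEF{} by maintaining, as a loop invariant, that $LP_2(\boundvec)$ is feasible for the current threshold vector $\boundvec$, and to show that this invariant simultaneously keeps Lemma \ref{lem:progress} applicable (so the loop can always make progress) and, together with Lemma \ref{lem:divisible-terminate}, forces termination. Correctness upon exit is then immediate from the results already established: when the while-loop halts, $LP_1(\boundvec)$ is feasible, so by Proposition \ref{proposition:feasible-ef} the returned $x$ satisfies density domination; by Lemma \ref{lem:density-dom} it is therefore $\FEF$ in the constructed instance with the fictional good $m+1$; and by Proposition \ref{proposition:no-cons}, projecting away the $(m+1)$-th coordinate yields an $\FEF$ allocation of the original $m$ goods.

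To establish the invariant I would first check the base case. At initialization $\boundvec = (1,\ldots,1)$, so $I_a(\boundvec) = \emptyset$ and $E_a(\boundvec) = \{\pi_a(1)\}$ for every agent $a$; hence the all-zero assignment $z_{a,g}=0$ trivially satisfies all of $C_\textcal{1}$ through $C_\textcal{5}$, and $LP_2((1,\ldots,1))$ is feasible. For the inductive step, suppose the invariant holds at the start of an iteration, i.e. $LP_2(\boundvec)$ is feasible, and note that being inside the loop means $LP_1(\boundvec)$ is infeasible. The only remaining hypothesis of Lemma \ref{lem:progress} is $\|\boundvec\|_\infty \leq m+1$, and this follows from the contrapositive of Lemma \ref{lem:divisible-terminate}: a feasible $LP_2(\boundvec)$ rules out any coordinate equal to $m+2$. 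Lemma \ref{lem:progress} then supplies an agent $k$ with $LP_2(\boundvec+e_k)$ feasible, which is exactly the update performed in Lines 3 and 4, so the invariant is restored at $\boundvec+e_k$.

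For termination and running time I would track the potential $\sum_{a=1}^n \boundvec_a$. Each iteration increments exactly one coordinate of $\boundvec$ by one, so this quantity grows by one per iteration, starting from $n$. Because the updated vector again makes $LP_2$ feasible, Lemma \ref{lem:divisible-terminate} keeps $\|\boundvec\|_\infty \leq m+1$ at every step, whence $\sum_a \boundvec_a \leq n(m+1)$ throughout; a strictly increasing integer potential bounded above forces the loop to exit (necessarily because $LP_1(\boundvec)$ has become feasible) after at most $nm$ iterations. Within one iteration the algorithm tests feasibility of $LP_1(\boundvec)$ and, to locate $k$, feasibility of at most $n$ programs $LP_2(\boundvec+e_k)$; each is a linear program over $O(nm)$ variables and $O(nm)$ constraints with data of size polynomial in the input, hence solvable in polynomial time. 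Multiplying by the $O(nm)$ iterations gives an overall polynomial running time.

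The genuine difficulty---that some increment $\boundvec + e_k$ keeps $LP_2$ feasible---has already been discharged in Lemma \ref{lem:progress}, so the remaining work is the bookkeeping that splices Lemmas \ref{lem:density-dom}, \ref{lem:progress}, and \ref{lem:divisible-terminate} together. The one point demanding care is the logical order: the bound $\|\boundvec\|_\infty \leq m+1$ required by Lemma \ref{lem:progress} is not tracked independently but \emph{deduced} from the invariant through Lemma \ref{lem:divisible-terminate}, so the invariant should be phrased as ``$LP_2(\boundvec)$ is feasible'' (which implies the norm bound) rather than as two separately maintained conditions.
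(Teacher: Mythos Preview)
Your proposal is correct and takes essentially the same approach as the paper: maintain feasibility of $LP_2(\boundvec)$ as the loop invariant, use Lemma~\ref{lem:progress} to progress and the contrapositive of Lemma~\ref{lem:divisible-terminate} to keep $\|\boundvec\|_\infty \le m+1$ and bound the iteration count at $O(nm)$, then invoke Proposition~\ref{proposition:feasible-ef}, Lemma~\ref{lem:density-dom}, and Proposition~\ref{proposition:no-cons} for correctness upon exit. One minor inaccuracy: the $C_\textcal{1}$ constraints are indexed over pairs of agents, so each LP has $O(n^2 m)$ constraints rather than $O(nm)$, but this does not affect the polynomial-time conclusion.
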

\begin{proof}
For the initial integer vector $\tau = (1,\ldots, 1)$, the linear program $LP_2(\boundvec)$ is feasible. This follows from the fact that, for this vector and all agents $a \in [n]$, we have $I_a(\tau) = \emptyset$. Hence, the $C_\textcal{1}$ and $C_\textcal{3}$ constraints are satisfied vacuously. Now, by selecting the all-zeros solution, we can satisfy all the remaining constraints. Hence, at the beginning of the while-loop in the algorithm, the program $LP_2(\tau)$ is feasible, and the algorithm maintains this feasibility as an invariant of the loop. 
 
In particular, at the beginning of each iteration of the while-loop, the program $LP_2(\tau)$ is feasible for the maintained vector $\tau$. Now, if for the current $\tau$, the program $LP_1(\boundvec)$ is feasible, then we return an allocation that is guaranteed to be \FEF{} (\cref{lem:density-dom} and \cref{proposition:feasible-ef}). Otherwise, if $LP_1(\tau)$ is infeasible, then the loop executes and we update $\tau$ to $\tau + e_k$. \cref{lem:progress} guarantees that in the current case we successfully find a $k \in [n]$ such that $LP_2(\tau+ e_k)$ is feasible. Therefore, the desired invariant---feasibility of $LP_2(\cdot)$---is maintained. 

Finally, we note that the while-loop cannot iterate indefinitely. After at most $n(m+1)$ iterations, the maintained vector $\tau$ increments up to satisfy $\| \tau \|_\infty=m+2$. However, by \cref{lem:divisible-terminate}, we know that for such a $\tau$ the program $LP_2(\boundvec)$ is infeasible. These observations imply that the while-loop necessarily terminates in $O(nm)$ iterations and the algorithm returns an $\FEF$ allocation. 

Since each iteration of the while-loop entails solving polynomially-large linear programs, the runtime of the algorithm is polynomially bounded. The theorem stands proved. 
\end{proof}
\section{Existence of $\FEFx$ Allocations}
\label{section:indivisible-fefx}
This section presents a constructive proof of existence of $\FEFx$ allocations under generalized assignment constraints. The proof relies on an algorithm (Algorithm \ref{algorithm:fefx}) that continually finds a \emph{minimal envied subset} $T$ (while one exists) in the charity and swaps $T$ with the bundle of an agent who envies it. We first define the concepts of {envied set} and {minimal envied sets} (Definition \ref{defn:envied_set}), and, subsequently, use them in the algorithm. 

As mentioned previously, in the case of generalized assignment constraints, a subset of indivisible goods $S \subseteq [m]$ is said to be feasible for an agent $a \in [n]$, iff $s_a(S) \leq B_a$. 

\begin{definition}[Envied and Minimal Envied Subsets]
\label{defn:envied_set}
For an allocation $\mathcal{A} = (A_1, \ldots, A_n)$, we say that a set of goods $T \subseteq [m]$ is \emph{envied} by an agent $a \in [n]$ iff there exists a subset $S \subseteq T$ that is feasible for agent $a$ and satisfies $v_a(S) > v_a(A_a)$.

Further, for allocation $\mathcal{A} = (A_1, \ldots, A_n)$, a set of goods $T \subseteq [m]$ is said to be a \emph{minimal envied set} iff the following conditions hold 
\begin{itemize}
\item $T$ is envied by some agent $k \in [n]$.
\item No strict subset $T' \subsetneq T$ is envied by any agent $k' \in [n]$. 
\end{itemize}
\end{definition}

Note that if a set $T$ is envied (by some agent $a \in [n]$), then there necessarily exists $T' \subseteq T$ that is a minimal envied set. 

The algorithm \ComputeFEFx{} (Algorithm \ref{alg:indivisible-efx}) is detailed next. We will show that it finds an $\FEFx$ allocation in finite time. A finite-time termination guarantee for Algorithm \ref{alg:indivisible-efx} suffices for the desired existential guarantee (\cref{thm:indivisible_efx}). The time complexity of the algorithm for generalized assignment constraints and pseudo-polynomial-time implementations of its steps are addressed in Section \ref{section:gac-fefx}. 

\begin{algorithm}
\caption{\ComputeFEFx{}} 
\textbf{Input:} Fair division instance $\langle [n], [m], \{v_a(g) \}_{a, g}, \{s_a(g)\}_{a,g}, \{B_a\}_a \rangle$ with indivisible goods and generalized assignment constraints.  \\ 
\textbf{Output:} An $\FEFx$ allocation. 
\label{algorithm:fefx}
\begin{algorithmic}[1]
\STATE Initialize allocation $\mathcal{A}=(A_1,\ldots,A_n)=(\emptyset, \ldots,\emptyset)$ and charity $C = [m]$. 
\WHILE{the charity $C$ is envied by any agent $a \in [n]$} \label{line:inner-while}
	\STATE \label{line:line1} Select a minimal envied set $T\subseteq C$ and let $k$ be the agent that envies $T$. 
	\STATE \label{line:line2} Update bundle $A_{k}\leftarrow T$ and charity $C\leftarrow [m]\setminus \left( \cup_{a=1}^n A_a \right)$.
\ENDWHILE\label{line:outer-loop-exit}
\RETURN Allocation $\mathcal{A}$.
\end{algorithmic}
\label{alg:indivisible-efx}
\end{algorithm}

For the purposes of analysis, write $\mathcal{A}\supscript{t}=\left(A_1\supscript{t}, A_2\supscript{t}, \ldots,A_n\supscript{t}\right)$ to denote the allocation maintained by Algorithm \ref{alg:indivisible-efx} just before the $t\Th$ iteration of the while loop of \cref{alg:indivisible-efx}. In particular, $\mathcal{A}\supscript{1} = (\emptyset, \ldots, \emptyset)$. Also, write $C\supscript{t}$ to denote the set of goods in charity just before the $t\Th$ iteration, i.e., $C \supscript{t} = [m] \setminus \left( \cup_{a=1}^n A\supscript{t}_a\right)$. 

Towards establishing that \cref{alg:indivisible-efx} computes an \FEFx{} allocation, we will first show that for any maintained allocation $\mathcal{A}\supscript{t}$, the \FEFx{} property is satisfied among the agents. Then, we will show that when the algorithm ends, the \FEFx{} property (in fact, the stronger $\FEF$ property) is satisfied for every agent against the charity. The proofs of the following two lemmas are direct and delegated to Appendix \ref{appendix:indivisible-fefx}.

\begin{restatable}{lemma}{LemmaEFxAgents}
\label{lem:efx-agents}
For each iteration count $t \geq 1$, the maintained allocation $\mathcal{A}\supscript{t} = \left(A_1\supscript{t}, \ldots,A_n\supscript{t}\right)$ is feasible and it upholds the \FEFx{} criterion among all the agents:  for each pair of agents $a, b \in [n]$ (with $A_b\supscript{t} \neq \emptyset$), and every strict subset $S \subsetneq A_b$, that is feasible for $a$, the following inequality holds $v_a\left(A_a\supscript{t}\right)\ge v_a\left(S\right)$. 
\end{restatable}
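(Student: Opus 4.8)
The plan is to prove Lemma~\ref{lem:efx-agents} by induction on the iteration count $t$, tracking two properties simultaneously: feasibility of the maintained allocation $\mathcal{A}\supscript{t}$ and the claimed $\FEFx$ property among the agents. The base case $t=1$ is immediate, since $\mathcal{A}\supscript{1}=(\emptyset,\ldots,\emptyset)$ is trivially feasible and the $\FEFx$ condition among agents holds vacuously (there is no agent $b$ with $A_b\supscript{1}\neq\emptyset$). For the inductive step, I would assume the statement holds for $\mathcal{A}\supscript{t}$ and analyze the single update performed in lines~\ref{line:line1}--\ref{line:line2}, where a minimal envied set $T\subseteq C\supscript{t}$ is identified, the agent $k$ who envies $T$ is chosen, and the bundle is reassigned via $A_k\supscript{t+1}\leftarrow T$ (all other bundles remaining unchanged).

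First I would dispatch feasibility. Since $T$ is envied by agent $k$, by Definition~\ref{defn:envied_set} there is a feasible subset $S\subseteq T$ with $v_k(S)>v_k(A_k\supscript{t})$; the key observation is that because $T$ is a \emph{minimal} envied set, $T$ itself must be feasible for $k$ (i.e.\ $s_k(T)\le B_k$), because otherwise the witnessing subset $S$ would be a strict subset $S\subsetneq T$ that is already envied, contradicting minimality. This gives $s_k(A_k\supscript{t+1})=s_k(T)\le B_k$, and since no other bundle changes, feasibility of $\mathcal{A}\supscript{t+1}$ follows from the inductive hypothesis.

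Next I would verify the $\FEFx$ condition among agents for $\mathcal{A}\supscript{t+1}$, splitting into cases according to which bundle plays the role of the ``envied'' bundle $A_b\supscript{t+1}$ and which agent $a$ is doing the evaluating. The cases where $b\neq k$ (the envied bundle is unchanged) and $a\neq k$ (the evaluating agent's own bundle is unchanged, so $v_a(A_a\supscript{t+1})=v_a(A_a\supscript{t})$) reduce, for the most part, to the inductive hypothesis; one must only additionally handle the new target bundle $A_k\supscript{t+1}=T$. The heart of the argument is showing that no agent $a$ envies a strict feasible subset $S\subsetneq T$ of the freshly assigned bundle: this is exactly the minimality of $T$. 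By Definition~\ref{defn:envied_set}, no strict subset $T'\subsetneq T$ is envied by \emph{any} agent, so for every agent $a$ and every strict feasible $S\subsetneq T$ we have $v_a(A_a\supscript{t})\ge v_a(S)$. Finally one checks that agent $k$'s own value only increased: since $k$ envied $T$ with witness $S$, and $k$ now receives all of $T$, we have $v_k(A_k\supscript{t+1})=v_k(T)\ge v_k(S)>v_k(A_k\supscript{t})$, so replacing $A_k\supscript{t}$ by the larger-valued $T$ cannot break any inequality of the form $v_k(A_k)\ge v_k(\cdot)$ that held before.

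The main obstacle, and the step deserving the most care, is the interplay between minimality and the two distinct quantifiers in the $\FEFx$ definition --- namely that minimality controls envy of strict subsets of $T$ by \emph{every} agent (not just $k$), while the reassignment changes $k$'s own reference value $v_k(A_k)$. I would organize the case analysis so that the ``minimality kills envy of subsets of $T$'' argument is isolated and applied uniformly, and separately record the monotone fact $v_k(A_k\supscript{t+1})\ge v_k(A_k\supscript{t})$ to carry over all of $k$'s previously-satisfied inequalities. One subtlety to state explicitly is that the goods comprising $T$ were drawn from the charity $C\supscript{t}$, hence were not part of any $A_b\supscript{t}$ with $b\neq k$; this disjointness guarantees $\mathcal{A}\supscript{t+1}$ remains a valid allocation (pairwise disjoint bundles) and that no previously-envied-against bundle $A_b$ gains goods, so the inductive hypothesis transfers cleanly to those unchanged bundles.
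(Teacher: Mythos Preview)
Your proposal is correct and follows essentially the same inductive argument as the paper: same base case, same three-way case split on whether $a=k$, $b=k$, or neither, with minimality of $T$ handling the $b=k$ case and the strict increase $v_k(A_k\supscript{t+1})>v_k(A_k\supscript{t})$ handling the $a=k$ case. Your explicit justification that $T$ itself is feasible for $k$ (via the witnessing $S$ being a strict subset if $s_k(T)>B_k$) and your remark on disjointness are spelled out in slightly more detail than in the paper, but the approach is the same.
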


The next lemma states that, in Algorithm \ref{alg:indivisible-efx}, if and when the while-loop terminates, every agent bears $\FEF$ against the charity. 

\begin{restatable}{lemma}{LemmaEFxCharity}
\label{lem:efx-charity}
Let the while-loop of Algorithm \ref{alg:indivisible-efx} terminate with allocation $\mathcal{A}=(A_1,\ldots,A_n)$. Then, every agent is $\FEF$ against the charity $C = [m] \setminus \left( \cup_{a=1}^n A_a \right)$, i.e., for each agent $a \in [n]$ and every feasible subset $S \subseteq C$, we have $v_a(A_a) \geq v_a(S)$. 
\end{restatable}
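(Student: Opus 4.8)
The plan is to obtain the conclusion directly from the exit condition of the while-loop in Algorithm \ref{alg:indivisible-efx}. The guard of the loop (Line \ref{line:inner-while}) is precisely the predicate ``the charity $C$ is envied by some agent $a \in [n]$''. Hence, once the loop terminates with allocation $\mathcal{A} = (A_1, \ldots, A_n)$ and charity $C = [m] \setminus \left( \cup_{a=1}^n A_a \right)$, the negation of this predicate must hold: no agent $a \in [n]$ envies $C$.

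Next, I would unfold the definition of an envied set (Definition \ref{defn:envied_set}). By that definition, $C$ being envied by an agent $a$ means that there exists a feasible subset $S \subseteq C$ (i.e., one with $s_a(S) \leq B_a$) satisfying $v_a(S) > v_a(A_a)$. The assertion that $C$ is \emph{not} envied by $a$ is therefore the statement that no such $S$ exists, i.e., every feasible subset $S \subseteq C$ obeys $v_a(S) \leq v_a(A_a)$. Rearranging yields $v_a(A_a) \geq v_a(S)$ for every feasible $S \subseteq C$, which is exactly the $\FEF$-against-the-charity condition in the lemma. Since the exit condition supplies this simultaneously for all agents, every agent is $\FEF$ against $C$.

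There is no genuine obstacle here: the lemma is a direct reading of the loop's termination criterion through the definition of envy, and in particular it does not require re-invoking feasibility (already handled by Lemma \ref{lem:efx-agents}). The only points requiring care are (i) identifying the terminal allocation $\mathcal{A}$ as the one maintained at the moment the guard first fails, so that each $A_a$ in the $\FEF$ inequality refers to the correct bundle, and (ii) confirming that ``not envied''---an existential negation over feasible subsets---contraposes correctly into the universally quantified inequality $v_a(A_a) \geq v_a(S)$ over all feasible $S \subseteq C$.
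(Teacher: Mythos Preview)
Your proposal is correct and follows exactly the same approach as the paper: both argue that the while-loop terminates only when no agent envies the charity, and then unfold Definition~\ref{defn:envied_set} to conclude that $v_a(A_a) \geq v_a(S)$ for every feasible $S \subseteq C$ and every agent $a$.
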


The main result of this section is established next. 

\begin{theorem}
\label{thm:indivisible_efx}
Any fair division instance of indivisible goods with generalized assignment constraints admits an \FEFx{} allocation.
\end{theorem}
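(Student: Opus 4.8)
The plan is to prove \cref{thm:indivisible_efx} by showing that Algorithm \ref{alg:indivisible-efx} (\ComputeFEFx) terminates in finite time; combined with \cref{lem:efx-agents} and \cref{lem:efx-charity}, finite termination immediately yields the existence of an $\FEFx$ allocation. Indeed, \cref{lem:efx-agents} guarantees that every intermediate allocation $\mathcal{A}\supscript{t}$ is feasible and satisfies $\FEFx$ among all pairs of agents, and \cref{lem:efx-charity} guarantees that once the while-loop exits, every agent is even $\FEF$ (hence $\FEFx$) against the charity. So the entire burden of the proof is a termination argument for the while-loop on \cref{line:inner-while}.

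The natural approach is to exhibit a potential function that strictly increases (in a suitable well-ordered sense) at every iteration, so the loop cannot run forever. The key structural fact driving this is that in each iteration we select a \emph{minimal} envied set $T \subseteq C\supscript{t}$ and assign it to an agent $k$ who envies it, so $v_k(T) > v_k(A_k\supscript{t})$. First I would observe that when $T$ is minimal and $k$ envies $T$, then in fact the entire set $T$ is the witnessing feasible subset, i.e.\ $s_k(T) \le B_k$ and $v_k(T) > v_k(A_k\supscript{t})$; otherwise a strict subset of $T$ would already be envied, contradicting minimality. Thus agent $k$'s value strictly increases: $v_k(A_k\supscript{t+1}) = v_k(T) > v_k(A_k\supscript{t})$, while the good that $k$ gives back to the charity (its old bundle $A_k\supscript{t}$) re-enters $C$. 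The plan is to track the valuation vector $\left(v_1(A_1), v_2(A_2), \ldots, v_n(A_n)\right)$, and argue that the appropriate order-statistic / lexicographic-type potential strictly improves. Concretely, I would consider the sorted vector of agents' valuations and argue that each swap lexicographically increases it (the chosen agent $k$ jumps strictly above its old value $v_k(A_k\supscript{t})$, and no agent's value ever drops below the threshold that made $T$ minimal), which, since values are integers, forces termination.

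The main obstacle I anticipate is ruling out cycling: when agent $k$ receives $T$, its old bundle $A_k\supscript{t}$ is returned to the charity, and these returned goods could in principle be re-assigned in a way that lets the algorithm revisit a previously seen configuration, or lets some \emph{other} agent's value decrease, breaking a naive monotone potential. The delicate part of the argument is therefore to show that the minimality of $T$ prevents such regressions: since no strict subset of $T$ is envied by anyone, every good that re-enters the charity is, roughly speaking, ``harmless'' in the sense that it cannot later be combined into a set that an agent values above its current bundle without that agent's value having genuinely increased. I expect the cleanest formalization to be via a lexicographic (or, equivalently, a weighted sum with geometrically separated weights) potential on the nondecreasingly-sorted valuation profile, invoking integrality of the $v_a(\cdot)$ and a uniform upper bound $v_a(A_a) \le \sum_{g \in [m]} v_a(g)$ to conclude that the strictly-increasing potential can take only finitely many values. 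This bounds the number of iterations and completes the proof that an $\FEFx$ allocation exists.
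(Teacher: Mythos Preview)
Your plan is correct in outline, and the lexicographic potential you sketch would ultimately work, but you are manufacturing a difficulty that is not there. You write that the ``main obstacle'' is ruling out cycling because when $A_k\supscript{t}$ returns to the charity ``some \emph{other} agent's value [could] decrease.'' Look again at Line~\ref{line:line2} of \cref{alg:indivisible-efx}: the update is $A_k \leftarrow T$ and then $C \leftarrow [m]\setminus\left(\cup_a A_a\right)$. No bundle other than $A_k$ is touched in the iteration. Hence for every agent $a\neq k$ we have $A_a\supscript{t+1}=A_a\supscript{t}$ and in particular $v_a(A_a\supscript{t+1})=v_a(A_a\supscript{t})$. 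Combined with your own observation that $v_k(A_k\supscript{t+1})=v_k(T)>v_k(A_k\supscript{t})$, this means the plain social welfare
\[
\Phi(\mathcal{A})\;=\;\sum_{a=1}^n v_a(A_a)
\]
strictly increases at every iteration. Since the $v_a(g)$ are integers and $\Phi$ is bounded above by $\sum_{a} v_a([m])$, termination is immediate. This is exactly the potential the paper uses.

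So your proposal is not wrong, just over-engineered: the sorted-profile lexicographic potential, the ``harmless returned goods'' argument, and the minimality-based non-regression step are all unnecessary once you notice that non-updated agents' bundles (and hence values) are literally unchanged. The minimality of $T$ is used only to guarantee feasibility of $T$ for $k$ and the inter-agent $\FEFx$ invariant (\cref{lem:efx-agents}), not for the termination potential.
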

\begin{proof}
In each iteration $t > 1$ of while-loop, the algorithm updates the allocation from $\mathcal{A}\supscript{t-1}= \left(A\supscript{t-1}_1, \ldots, A\supscript{t-1}_n \right)$ to $\mathcal{A}\supscript{t}= \left(A\supscript{t}_1, \ldots, A\supscript{t}_n  \right)$. Note that, here, for some agent $k \in [n]$, the value of the assigned bundle strictly increases, $v_k  \left(A\supscript{t-1}_k  \right) > v_k \left(A\supscript{t}_k  \right)$. This follows from the fact that $k$ receives a set $\left( A_k\supscript{t} = T \right)$ that it envies. Furthermore, the bundles of all agents $a \neq k$ remain unchanged. Therefore, the social welfare of the agents strictly increases in each iteration of \ComputeFEFx{}: $\sum_{a=1}^n v_a \left(A\supscript{t}_a  \right) > \sum_{a=1}^n v_a \left(A\supscript{t-1}_a  \right)$. 

Since the social welfare under the initial allocation is zero and the social welfare of any $\mathcal{A}\supscript{t}$ cannot exceed $\sum_{a=1}^n v_a([m])$, we get that the loop terminates in finite time. Furthermore, Lemmas \ref{lem:efx-agents} and \ref{lem:efx-charity} imply that the returned allocation is indeed $\FEFx$. 

The guaranteed success of the algorithm establishes the existence of an $\FEFx$ allocation. The theorem stands proved. 
\end{proof}

\subsection{Pseudo-Polynomial Time Algorithm for Finding $\FEFx$ Allocations}
\label{section:gac-fefx}

This section shows that, for generalized assignment constraints, the steps in Algorithm \ref{alg:indivisible-efx} can be implemented such that the algorithm executes in pseudo-polynomial time. Hence, under these constraints, we obtain a pseudo-polynomial time algorithm for finding $\FEFx$ allocations. 

Recall that in the classic Knapsack problem, we are given set of items---each with a weight $w_t \in \mathbb{Q}_+$ and a value $v_t \in \mathbb{Q}_+$---along with a capacity $W \in \mathbb{Q}_+$. The objective here is to find a maximum-valued subset of items with total weight at most $W$. 

Indeed, the problem of determining whether an agent $a \in [n]$ envies a set of goods $T$ (see Definition \ref{defn:envied_set}) corresponds to the Knapsack problem, in which the weights $w_t = s_a(t)$ and values $v_t = v_a(t)$, for all items $t$, along with the capacity $W = B_a$. We will write $\Knapsack(a, T)$ to denote the solution (subset) obtained from such an instantiation of the Knapsack problem. That is, for any agent $a \in [n]$ and set of goods $T \subseteq [m]$, write 
\begin{align}
\Knapsack(a, T) \coloneqq \argmax_{S \subseteq T: s_a(S) \leq B_a} \ \prof_a(S) \label{defn:knapsack}
\end{align}

It is well-known that the Knapsack problem admits a pseudo-polynomial time algorithm; see, e.g., \cite{Kellerer2004}. Hence, for any agent $a$ and any set of goods $T$, the subset $\Knapsack(a, T)$ can be computed in time $O\left(m \ B_a \right)$. {Alternatively, one can compute $\Knapsack(a, T)$ in time $O\left(m \ v_a (T) \right)$. 

Note that, under an allocation $\mathcal{A}=(A_1, \ldots, A_n)$, an agent $a \in [n]$ envies set $T \subseteq [m]$ iff $v_a\left(\Knapsack(a, T)\right) > v_a(A_a)$.  
These observations imply that the execution condition of the while-loop in Algorithm \ref{alg:indivisible-efx} can be implemented in time $O \left(nm \max_a B_a \right)$. 

Next, we detail a subroutine (Algorithm \ref{alg:findminimalsubset}) that provides a pseudo-polynomial implementation of Line \ref{line:line1} of Algorithm \ref{alg:indivisible-efx}.  That is, the subroutine finds a minimal envied set within the charity. 

\begin{algorithm}[H]
\caption{$\FindMinimalEnviedSubset(C,\mathcal{A})$ -- Under allocation $\mathcal{A}$, find a minimal envied set within the charity $C$ and an envying agent $k$.}
\begin{algorithmic}[1] \label{alg:findminimalsubset}
\STATE Initialize set of goods $T=C$ and initialize $k \in [n]$ to be an agent that envies $C$. \Comment{This routine is called only when $C$ is envied by some agent.}
\WHILE{there exists a good ${g'} \in T$ and agent ${a'} \in [n]$ such that $v_{{a'}} \left( \Knapsack\left({a'}, T - {g'} \right) \right) > v_{{a'}} (A_{{a'}})$ (i.e., ${a'}$ envies $(T - {g'})$)}
\STATE Update $T \leftarrow T - g'$ and set agent $k = a'$.  \label{line:update-T} 
\ENDWHILE
\RETURN $(T, k)$
\end{algorithmic}
\end{algorithm}

\begin{restatable}{lemma}{LemmaMinEvSub} \label{lem:ks-min-envied}
The subroutine $\FindMinimalEnviedSubset( C,\mathcal{A})$ correctly computes a minimal envied subset of \ $C$, along with a corresponding envying agent $k$, and it executes in time $O\left({\rm poly}(n, m) \max_a B_a \right)$. 
\end{restatable}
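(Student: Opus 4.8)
The plan is to prove the final statement (\cref{lem:ks-min-envied}), which has two parts: correctness of \FindMinimalEnviedSubset{} and its pseudo-polynomial running time.

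\medskip

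\textbf{Correctness.} First I would argue that the subroutine returns a set $T$ that is genuinely envied, along with a valid envying agent. The loop is entered only when the charity $C$ is envied (per the comment and the calling context in \cref{alg:indivisible-efx}), so the initial $(T,k)$ with $T = C$ is a valid envied set with envying agent $k$. The key invariant I would maintain is that, at the top of every iteration, the current $T$ is envied by the current agent $k$. This is preserved because \cref{line:update-T} only updates $T \leftarrow T - g'$ and sets $k = a'$ precisely when the while-condition certifies that $a'$ envies $T - g'$, i.e.\ $v_{a'}(\Knapsack(a', T-g')) > v_{a'}(A_{a'})$; by the observation preceding the algorithm, this is exactly the statement that $a'$ envies $T-g'$. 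Next I would argue minimality. When the loop terminates, the while-condition fails, meaning there is no good $g' \in T$ and no agent $a'$ such that $a'$ envies $T - g'$. Hence no set of the form $T - g'$ (for any single good removal) is envied by anyone. Since envy is monotone under taking supersets---if a strict subset $T' \subsetneq T$ were envied by some agent, then $T' \subseteq T - g'$ for some $g' \in T \setminus T'$, and by monotonicity of envy (any feasible subset witnessing envy of $T'$ also lies in $T - g'$) that agent would envy $T - g'$ as well---no strict subset of $T$ is envied by any agent. Combined with $T$ being envied by $k$, this is exactly the definition of a minimal envied set (\cref{defn:envied_set}). I would also note the loop must terminate because $|T|$ strictly decreases with each iteration and $T$ cannot shrink below a single-good envied set (in fact it stops at or before $|T|=1$, unless a singleton is itself infeasible/unenvied, in which case termination occurs earlier).

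\medskip

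\textbf{Running time.} The loop runs at most $|C| \le m$ iterations, since each iteration removes one good from $T$. Checking the while-condition requires, in the worst case, iterating over all goods $g' \in T$ (at most $m$ choices) and all agents $a' \in [n]$, and for each pair computing $\Knapsack(a', T - g')$. By the discussion preceding the lemma, each such Knapsack computation takes time $O(m \, B_{a'}) = O(m \max_a B_a)$. Thus one evaluation of the while-condition costs $O(n m \cdot m \max_a B_a) = O(n m^2 \max_a B_a)$, and multiplying by the at most $m$ iterations gives total time $O(n m^3 \max_a B_a) = O(\poly(n,m) \max_a B_a)$, as claimed.

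\medskip

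\textbf{Main obstacle.} The step I expect to require the most care is the minimality argument, specifically the monotonicity claim that envy of a strict subset $T' \subsetneq T$ forces envy of some single-good-removed superset $T - g'$. This follows because $\Knapsack(a, \cdot)$ is monotone under set inclusion ($S \subseteq S'$ implies $v_a(\Knapsack(a,S)) \le v_a(\Knapsack(a,S'))$, since every feasible candidate for the smaller set remains feasible for the larger one), so if some agent envies $T'$ then that agent also envies any superset of $T'$ contained in $T$, in particular $T - g'$ for $g' \in T \setminus T' \neq \emptyset$. Once this monotonicity is made explicit, the termination condition of the loop directly yields the ``no strict subset is envied'' property, and the rest is bookkeeping. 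The running-time analysis is routine given the pseudo-polynomial Knapsack bound already established in the text.
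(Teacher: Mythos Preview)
Your proposal is correct and follows essentially the same approach as the paper's proof: you maintain the invariant that $T$ is envied by $k$, use the monotonicity of envy to argue that failure of the while-condition rules out envy of any strict subset (since any such $T' \subsetneq T$ sits inside some $T - g'$), and bound the runtime by counting at most $m$ iterations times $O(nm^2 \max_a B_a)$ per while-condition check. The paper's argument is identical in structure, just somewhat terser.
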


The proof of Lemma \ref{lem:ks-min-envied} appears in Appendix \ref{appendix:knapsack}. Building up on the lemma, the following theorem establishes that $\FEFx$ allocations can be computed in pseudo-polynomial time. 

\begin{theorem}\label{theorem:pseudo-poly}
For any given fair division instance $\langle [m], [n], \{ v_a(g) \}_{a,g}, \{s_a(g)\}_{a,g}, \{B_a\}_{a} \rangle$ with generalized assignment constraints, we can compute an $\FEFx$ allocation in time 
\begin{align*}
O\left( {\rm poly}(n,m) \ \max_{a\in[n]} B_a  \ \max_{a\in[n]} \prof_a([m]) \right).
\end{align*}
\end{theorem}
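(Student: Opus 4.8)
The plan is to bound the total running time of \ComputeFEFx{} (Algorithm \ref{alg:indivisible-efx}) by multiplying the number of while-loop iterations with the per-iteration cost, where the per-iteration cost is dominated by the call to \FindMinimalEnviedSubset{} whose complexity is already established in Lemma \ref{lem:ks-min-envied}. The main conceptual ingredient needed---beyond the correctness already guaranteed by Theorem \ref{thm:indivisible_efx}---is a \emph{polynomial} bound on the number of iterations, since the existence proof only asserts finiteness via a social-welfare potential argument. The potential argument there shows that social welfare strictly increases each iteration, but naively this only gives a pseudo-polynomial iteration bound in terms of the values; I would argue more carefully to get a clean bound.

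First I would analyze the per-iteration cost. Each iteration of the outer while-loop of Algorithm \ref{alg:indivisible-efx} does two things: it checks the loop condition (whether the charity is envied), and, if so, it calls \FindMinimalEnviedSubset{} on Line \ref{line:line1}. As noted in the excerpt, the loop condition can be tested in time $O\left(nm \max_a B_a\right)$ by computing $v_a\left(\Knapsack(a, C)\right)$ for each agent $a$, and by Lemma \ref{lem:ks-min-envied} the call to \FindMinimalEnviedSubset{} runs in time $O\left(\poly(n,m) \max_a B_a\right)$. Hence the cost of a single iteration is $O\left(\poly(n,m) \max_a B_a\right)$.

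Next I would bound the number of iterations. Here I would lean on the social-welfare potential $\Phi(\mathcal{A}) = \sum_{a=1}^n v_a(A_a)$, which is integer-valued (recall the standing assumption that all values are in $\mathbb{Z}_+$, established via scaling in Section \ref{sec:notation}). As shown in the proof of Theorem \ref{thm:indivisible_efx}, $\Phi$ strictly increases in each iteration, so it increases by at least $1$ per iteration. Since $\Phi$ starts at $0$ and is bounded above by $\sum_{a=1}^n v_a([m]) \leq n \max_{a} v_a([m])$, the number of iterations is at most $n \max_{a} v_a([m])$, which is $O\left(\poly(n) \max_a \prof_a([m])\right)$. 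Multiplying the iteration count by the per-iteration cost yields the stated bound $O\left(\poly(n,m) \max_{a} B_a \max_{a} \prof_a([m])\right)$.

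I expect the main obstacle to be the iteration-count bound, since it is the only place where a genuinely new quantitative estimate (beyond the finiteness already in Theorem \ref{thm:indivisible_efx}) is required; the per-iteration analysis is essentially a direct invocation of Lemma \ref{lem:ks-min-envied} together with the knapsack-based test of the loop condition. The subtlety to be careful about is that the social-welfare increment is at least $1$ precisely because of integrality of values, so it is worth stating explicitly that the integral-values assumption is what converts the finiteness argument into a pseudo-polynomial bound, and that the overall product---number of iterations times per-iteration cost---gives a pseudo-polynomial (rather than strongly polynomial) running time, consistent with the NP-hardness of exact $\FEFx$ computation mentioned in the introduction.
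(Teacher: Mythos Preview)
Your proposal is correct and follows essentially the same approach as the paper: bound the per-iteration cost via Lemma~\ref{lem:ks-min-envied} and the knapsack-based loop test, bound the number of iterations by the integral social-welfare potential $\sum_a v_a(A_a) \le n\max_a v_a([m])$, and multiply. The only minor discrepancy is your aside about ``arguing more carefully to get a clean bound'' beyond the pseudo-polynomial one---in fact you (and the paper) end up with exactly the pseudo-polynomial iteration count $n\max_a v_a([m])$, which is all that is claimed.
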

\begin{proof}
We know that Algorithm \ref{algorithm:fefx} finds an $\FEFx$ allocation (Theorem \ref{thm:indivisible_efx}). Hence, to complete the proof of the theorem it remains to show that the algorithm admits a pseudo-polynomial time implementation. Towards this, first, note that the execution condition of the while-loop (Line \ref{line:inner-while}) in Algorithm \ref{algorithm:fefx} can be evaluated in $O \left(nm \max_a B_a \right)$ time. Recall that testing if an agent $a$ envies the charity $C$ corresponds to verifying whether the following strict inequality holds: $v_a\left(\Knapsack(a, C)\right) > v_a(A_a)$. 

Next, invoking subroutine $\FindMinimalEnviedSubset$, we can execute Line \ref{line:line1} of Algorithm \ref{algorithm:fefx}. Since the time complexity of this subroutine is $O\left({\rm poly}(n, m) \max_a B_a \right)$ (Lemma \ref{lem:ks-min-envied}), even this line of Algorithm \ref{algorithm:fefx} runs in pseudo-polynomial time. Hence, each step of the algorithm can be implemented in pseudo-polynomial time. 

To complete the runtime analysis, we show that the algorithm iterates at most $n \max_{a\in[n]} \prof_a([m])$ times. As observed in the proof of Theorem \ref{thm:indivisible_efx}, the social welfare $\sum_{a=1}^n v_a(A_a)$ strictly increases in each iteration of Algorithm \ref{algorithm:fefx}. Also, recall the assumption that the values and sizes are integral. Hence, in each iteration the welfare increases by at least $1$. Given that the welfare is upper bounded by $n \max_a v_a([m])$, we get that the algorithm iterates at most $n \max_{a\in[n]} \prof_a([m])$ times. The above-mentioned observations show that Algorithm \ref{algorithm:fefx} finds an $\FEFx$ allocation in time $O\left( {\rm poly}(n,m)  \max_{a\in[n]} B_a   \max_{a\in[n]} \prof_a([m]) \right)$. This completes the proof of the theorem. 
\end{proof}
\subsection{FPTAS for \FEFx{} Allocations}
\label{sec:approx-fefx}
This section shows that the pseudo-polynomial time algorithm---detailed in the Section \ref{section:gac-fefx}---can be altered to obtain a polynomial-time algorithm for computing approximate $\FEFx$ allocations. We formally define the notion of approximate $\FEFx$ next. 

\begin{definition}[$\approxFEFx{(1-\eps)}$ allocation] \label{defn:apx-fefx}
For parameter $\eps \in [0,1)$, an allocation $\mathcal{A}=(A_1,\ldots,A_n)$ of indivisible goods is said to be $\approxFEFx{(1-\eps)}$ iff the following two conditions hold
\begin{itemize}
\item For each pair of agents $a, b \in [n]$, and every strict subset $S \subsetneq A_b$, with the property that $s_a(S) \leq B_a$, we have $ \prof_a(A_a) \geq (1-\eps)\prof_a(S)$.
\item Similarly, for each agent $a \in [n]$, and every strict subset $S$ of the charity (i.e., $S \subsetneq C = [m] \setminus \left( \cup_{i=1}^{n} A_i \right)$), with the property that $s_a(S) \leq B_a$, we have $ \prof_a(A_a) \geq (1-\eps)\prof_a(S)$.
\end{itemize}
\end{definition}

We next define the approximate counterparts of {envied sets} and {minimal envied sets}. 

\begin{definition}
\label{defn:approx_envied_set}
For parameter $\eps\in(0,1]$ and allocation $\mathcal{A} = (A_1, \ldots, A_n)$, we say that a set of goods $T \subseteq [m]$ is \emph{$(1-\eps)$-envied} by an agent $a \in [n]$ iff there exists a subset $S \subseteq T$ that is feasible for agent $a$ and satisfies $(1-\eps) v_a(S) > v_a(A_a)$.

Further, for allocation $\mathcal{A}$, a set of goods $T \subseteq [m]$ is said to be a \emph{$(1-\eps)$-minimal envied set} iff the following conditions hold
\begin{itemize}
\item $T$ is $\left(1-\frac{\eps}{2} \right)$-envied by some agent $k \in [n]$, for whom $T$ is feasible as well (i.e., $s_k(T) \leq B_k$).
\item No strict subset $T' \subsetneq T$ is $(1-\eps)$-envied by any agent $k' \in [n]$.
\end{itemize}
\end{definition}

Note that, by definition, a minimal envied set $T$ is feasible for some agent $k$. Also, the envy requirements in the definition of a minimal envied set $T$ are asymmetric; in particular, there exists an agent that $\left(1-\frac{\eps}{2} \right)$-envies $T$ and no agent $(1-\eps)$-envies any strict subset of $T$. 

Recall that, for any agent $a$ and set of goods $T$, we write $\Knapsack(a,T)$ to denote a maximum-valued and feasible (according to $a$) subset $S \subseteq T$; see equation (\ref{defn:knapsack}). Also, under an allocation $\mathcal{A}= (A_1, \ldots, A_n)$, an agent $a$ envies a set $T$ (see Definition \ref{defn:envied_set}) iff $v_a\left( \Knapsack(a,T) \right) > v_a(A_a)$.

To obtain an approximation guarantee for $\FEFx$, we will utilize the known fully polynomial-time approximation scheme (FPTAS) for the Knapsack problem; see, e.g., \cite{Kellerer2004}. Specifically, we write $\FPTASKnapsack(a,T,\eps)$ to denote a feasible subset (solution) obtained by running the FPTAS with input set $T$, knapsack capacity $B_a$, and accuracy parameter $\eps >0$. Hence, the subset $\FPTASKnapsack(a,T,\eps)$ can be computed in time that is polynomial in $\frac{1}{\eps}$ and the input size. In addition, for any $\epsilon \in (0,1]$, we have
\begin{align*}
	\prof_a(\Knapsack(a,T)) \geq v_a\left(\FPTASKnapsack(a,T,\eps) \right) \geq (1-\eps) \prof_a(\Knapsack(a,T)).
\end{align*}

These inequalities directly imply the following proposition.  
\begin{proposition}\label{proposition:kp-envy}
For any allocation $\mathcal{A}=(A_1, \ldots, A_n)$, parameter $\eps \in (0,1/2]$, agent $a \in [n]$, and set $T \subseteq [m]$, if we have $v_a(A_a) < (1-\eps) \ \prof_a (\FPTASKnapsack(a, T, {\eps}))$, then the set $T$ is $(1-\eps)$-envied by agent $a$. Complementarily, the inequality $\prof_a(A_a) \geq \left( 1- {\eps} \right) v_a \left( \FPTASKnapsack \left(a, T, \eps \right) \right)$ implies that the agent $a$ does not $(1-2 \eps)$-envy the set $T$. 
\end{proposition}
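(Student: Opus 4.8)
The plan is to reduce both claims to statements about the exact Knapsack optimum $\Knapsack(a,T)$, exploiting the fact that approximate envy is governed by the \emph{most valuable} feasible subset of $T$. By Definition \ref{defn:approx_envied_set}, agent $a$ $(1-\delta)$-envies $T$ precisely when there is a feasible $S\subseteq T$ with $(1-\delta)v_a(S)>v_a(A_a)$; since $\Knapsack(a,T)$ maximizes $v_a(\cdot)$ over all feasible subsets of $T$, this is equivalent to $(1-\delta)\,v_a(\Knapsack(a,T))>v_a(A_a)$. I will use this equivalence for $\delta=\eps$ and $\delta=2\eps$, respectively, together with the two-sided FPTAS guarantee stated just above the proposition, namely $v_a(\Knapsack(a,T))\ge v_a(\FPTASKnapsack(a,T,\eps))\ge (1-\eps)\,v_a(\Knapsack(a,T))$.

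For the first (forward) implication, I would take the approximate knapsack solution itself as the witnessing subset. Setting $S:=\FPTASKnapsack(a,T,\eps)$, note that $S$ is by construction a feasible subset of $T$ for agent $a$. The hypothesis $v_a(A_a)<(1-\eps)\,v_a(\FPTASKnapsack(a,T,\eps))$ then reads exactly as $(1-\eps)v_a(S)>v_a(A_a)$, so $S$ certifies that $T$ is $(1-\eps)$-envied by $a$. No appeal to the exact optimum is even needed here; the approximate solution suffices as a concrete witness.

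For the second (complementary) implication, I would argue that no feasible subset can trigger $(1-2\eps)$-envy. It is enough to bound the best feasible subset, so I would show $(1-2\eps)\,v_a(\Knapsack(a,T))\le v_a(A_a)$. Chaining the hypothesis with the lower bound of the FPTAS guarantee gives $v_a(A_a)\ge (1-\eps)\,v_a(\FPTASKnapsack(a,T,\eps))\ge (1-\eps)^2\,v_a(\Knapsack(a,T))$, and the elementary inequality $(1-\eps)^2=1-2\eps+\eps^2\ge 1-2\eps$ closes the gap. Combined with the equivalence above (for $\delta=2\eps$), this rules out any feasible $S\subseteq T$ with $(1-2\eps)v_a(S)>v_a(A_a)$, so $a$ does not $(1-2\eps)$-envy $T$.

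The only genuinely substantive point—rather than an obstacle—is recognizing that the per-subset definition of approximate envy collapses to a single inequality in $v_a(\Knapsack(a,T))$, after which both parts are immediate from the FPTAS sandwich; the quadratic-versus-linear slack $(1-\eps)^2\ge 1-2\eps$ is precisely what forces the asymmetry between the $(1-\eps)$ and $(1-2\eps)$ thresholds in the statement, and explains why the restriction $\eps\in(0,1/2]$ keeps all the relevant factors nonnegative.
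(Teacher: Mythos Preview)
Your proof is correct and follows exactly the approach the paper intends: the paper states that the FPTAS sandwich $v_a(\Knapsack(a,T))\ge v_a(\FPTASKnapsack(a,T,\eps))\ge (1-\eps)\,v_a(\Knapsack(a,T))$ ``directly implies'' the proposition, and you have simply spelled out that direct implication---using the approximate solution itself as the witness for the first part, and chaining the lower FPTAS bound with $(1-\eps)^2\ge 1-2\eps$ for the second.
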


We now present the algorithm (Algorithm \ref{alg:approx_indivisible-efx}) to compute approximate $\FEFx{}$ allocations. The algorithm uses $\ApproxMinEnviedSubset$ (Algorithm \ref{alg:approx_findminimalsubset}) as a subroutine. 
\begin{algorithm}
\caption{$\ComputeApproxFEFx{}$}
\textbf{Input:} Fair division instance $\langle [n], [m], \{v_a(g) \}_{a, g}, \{s_a(g)\}_{a,g}, \{B_a\}_a \rangle$ with indivisible goods and generalized assignment constraints along with a parameter $\eps\in(0,1]$.\\
\textbf{Output:} A $\approxFEFx{(1-\eps)}$ allocation.
\begin{algorithmic}[1]
\STATE Initialize allocation $\mathcal{A}=(A_1,\ldots,A_n)=(\emptyset,\ldots,\emptyset)$ and charity $C=[m]$.
\WHILE{there exists an agent $a \in [n]$ such that $\prof_a(A_a)<\left(1- \frac{\eps}{2} \right) \ v_a \left( \FPTASKnapsack \left(a,C, \frac{\eps}{2} \right) \right)$}
	\STATE $({T,k}) = \ApproxMinEnviedSubset(C,\mathcal{A}, \eps)$. \label{line:approx-min-envied}
	\STATE Update bundle $A_{k} = T$ and charity $C = [m] \backslash \left(\cup_{a=1}^{n} A_a\right)$.
\ENDWHILE
\RETURN Allocation $\mathcal{A}=(A_1, \ldots, A_n)$.
\end{algorithmic}
\label{alg:approx_indivisible-efx}
\end{algorithm}

\begin{algorithm}
\caption{$\ApproxMinEnviedSubset(C,\mathcal{A}, \eps)$ --
Under allocation $\mathcal{A}$,
find a $(1-\eps)$-minimal envied subset of $C$ and an associated $(1-\eps/2)$-envying agent $k$.}
\begin{algorithmic}[1]
\STATE Initialize set of goods $T=C$. 
\STATE Initialize $k \in [n]$ to some agent for whom $\prof_k(A_k)<\left(1- \frac{\eps}{2} \right) \ v_k \left( \FPTASKnapsack \left(k,T, \frac{\eps}{2} \right) \right)$. \\ \Comment{This strict inequality holds for some agent $k$, whenever the subroutine is called.}\label{line:init-k}
\WHILE{there exists a good $g'\in T$ and agent $a'\in[n]$ such that \\$\prof_{a'}(A_{a'})< \left(1-\eps/2 \right) \ v_{a'} \left( \FPTASKnapsack\left(a', T-g', \eps/2 \right)\right)$}
\STATE Update $T \leftarrow T - g'$ and set agent $k=a'$.\label{line:update-T-approx}
\ENDWHILE
\STATE Update $T \leftarrow \FPTASKnapsack \left(k,T, \frac{\eps}{2} \right)$. \label{line:make-feasible}
\RETURN $({T,k})$
\end{algorithmic}
\label{alg:approx_findminimalsubset}
\end{algorithm}

\begin{restatable}{lemma}{LemmaApxMinEn}
\label{lem:approx-min-envied}
Given any allocation $\mathcal{A}=(A_1, \ldots, A_n)$, any parameter $\eps \in (0,1]$, and charity $C=[m] \setminus \left( \cup_a A_a \right)$, the subroutine $\ApproxMinEnviedSubset(C,\mathcal{A}, \eps)$ (\cref{alg:approx_findminimalsubset}) correctly computes a $(1-\eps)$-minimal envied subset of $C$ and a corresponding $\left(1-\eps/2 \right)$-envying agent $k$. The runtime of the subroutine is polynomial in $\frac{1}{\eps}$ and the input size.
\end{restatable}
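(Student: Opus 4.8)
The plan is to verify the two defining conditions of a $(1-\eps)$-minimal envied set (Definition \ref{defn:approx_envied_set}) for the output $(T,k)$ of Algorithm \ref{alg:approx_findminimalsubset}, and then bound its runtime. First I would argue correctness of the first condition: feasibility of $T$ for $k$ and the fact that $k$ is a $(1-\eps/2)$-envying agent. The key observation is that the final assignment $T \leftarrow \FPTASKnapsack(k, T, \eps/2)$ on Line \ref{line:make-feasible} makes $T$ a feasible subset for $k$ by construction (the FPTAS only returns subsets of weight at most $B_k$), so $s_k(T) \leq B_k$ automatically. For the envy, I would track the loop invariant that the agent $k$ maintained at the top of Line \ref{line:make-feasible} satisfies $\prof_k(A_k) < (1-\eps/2)\, v_k(\FPTASKnapsack(k, T_{\mathrm{pre}}, \eps/2))$, where $T_{\mathrm{pre}}$ is the set just before Line \ref{line:make-feasible}; this is guaranteed either by the initialization on Line \ref{line:init-k} (if the while-loop never executes) or by the last execution of Line \ref{line:update-T-approx}. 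Since the returned $T$ equals $\FPTASKnapsack(k, T_{\mathrm{pre}}, \eps/2)$, which is itself a subset $S \subseteq T_{\mathrm{pre}}$ feasible for $k$, the inequality $(1-\eps/2)\, v_k(T) > \prof_k(A_k)$ witnesses that $k$ indeed $(1-\eps/2)$-envies $T$ via Proposition \ref{proposition:kp-envy}.

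Next I would establish the second (minimality) condition: no strict subset $T' \subsetneq T$ is $(1-\eps)$-envied by any agent. The loop exits precisely when, for the set $T_{\mathrm{pre}}$ reached at termination, no good $g'$ and agent $a'$ satisfy $\prof_{a'}(A_{a'}) < (1-\eps/2)\, v_{a'}(\FPTASKnapsack(a', T_{\mathrm{pre}} - g', \eps/2))$. By the complementary half of Proposition \ref{proposition:kp-envy}, this exit condition implies that no agent $(1-\eps)$-envies any set of the form $T_{\mathrm{pre}} - g'$, hence no agent $(1-\eps)$-envies any strict subset of $T_{\mathrm{pre}}$. The subtle point I must handle carefully is that the returned $T$ is not $T_{\mathrm{pre}}$ but the FPTAS-shrunk set $\FPTASKnapsack(k, T_{\mathrm{pre}}, \eps/2) \subseteq T_{\mathrm{pre}}$; since any strict subset $T' \subsetneq T$ is also a strict subset of $T_{\mathrm{pre}}$, the non-envy of strict subsets of $T_{\mathrm{pre}}$ carries over directly to strict subsets of the returned $T$. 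This monotonicity of ``not-$(1-\eps)$-envied'' under taking subsets is exactly what makes the final shrink step harmless for minimality.

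For the runtime, I would bound the number of while-loop iterations by $|C| \leq m$, since each execution of Line \ref{line:update-T-approx} removes one good from $T$ and $T$ never grows inside the loop. Each iteration scans over all $O(nm)$ agent-good pairs and, for each, invokes $\FPTASKnapsack$, which runs in time polynomial in $\frac{1}{\eps}$ and the input size. Multiplying these factors yields an overall runtime polynomial in $\frac{1}{\eps}$ and the input size, as claimed.

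The main obstacle I anticipate is the interplay between the two asymmetric accuracy levels, $(1-\eps/2)$ for the envying agent and $(1-\eps)$ for the minimality guarantee, mediated through the FPTAS error. The definitions are deliberately slack by a factor of $\eps/2$ so that the gap between what $\FPTASKnapsack$ certifies and the true $\Knapsack$ optimum (a factor of $(1-\eps/2)$, per the displayed FPTAS guarantee) can be absorbed; the delicate part is confirming that Proposition \ref{proposition:kp-envy}, stated for parameter $\eps$, is being applied with the correct instantiation (namely $\eps/2$ in place of its $\eps$) so that its two conclusions translate to genuine $(1-\eps/2)$-envy and genuine non-$(1-\eps)$-envy, respectively. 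Getting these constants to line up cleanly — rather than the calculations themselves — is where the real care is required.
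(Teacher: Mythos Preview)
Your proposal is correct and follows essentially the same approach as the paper: maintain the loop invariant $\prof_k(A_k) < (1-\eps/2)\,v_k(\FPTASKnapsack(k,T,\eps/2))$, use the termination condition together with Proposition~\ref{proposition:kp-envy} (instantiated with $\eps/2$) to rule out $(1-\eps)$-envy for all $T_{\mathrm{pre}}-g'$, note that the final shrink in Line~\ref{line:make-feasible} only passes to a subset and hence preserves minimality, and bound the iteration count by $m$. One small remark: for the first condition you do not really need Proposition~\ref{proposition:kp-envy}, since the returned $T$ is itself feasible for $k$ and satisfies $(1-\eps/2)\,v_k(T) > v_k(A_k)$, which directly exhibits $S=T$ as the witness of $(1-\eps/2)$-envy.
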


\begin{proof}
We first establish the time complexity of the subroutine. As mentioned previously, for any set of goods $S$, any agent $a \in [n]$, and parameter $\eps \in (0,1]$, the subset $\FPTASKnapsack \left(a, S, \eps \right)$ is computed via the standard FPTAS for the Knapsack problem. Hence, in the subroutine (\cref{alg:approx_findminimalsubset}), the initialization and the execution condition of the while-loop  can be implemented in time that is polynomial in $1/\eps$ and the input size. Furthermore, note that the while-loop of the subroutine iterates at most $m$ times, since in each iteration the size of the maintained set $T$ is decremented by $1$. Therefore, as stated in the lemma, runtime of the subroutine is polynomial in $\frac{1}{\eps}$ and the input size. 

Now, we establish the correctness of the subroutine. Throughout the execution of the while-loop in the subroutine, the maintained set $T$ and the corresponding agent $k$ satisfy 
\begin{align}
\prof_k(A_k) & <\left(1- \frac{\eps}{2} \right) \ v_a \left( \FPTASKnapsack \left(k,T, \frac{\eps}{2} \right) \right) \label{ineq:invariant}
\end{align} Indeed, this property is upheld by the selection criterion of the loop and the update in Line \ref{line:update-T-approx}. Hence, the while-loop terminates with a set $T$ and an agent $k$ that satisfy inequality (\ref{ineq:invariant}). In Line \ref{line:make-feasible} of Algorithm \ref{alg:approx_findminimalsubset}, we update $T$ to be the subset $\FPTASKnapsack \left(k,T, \frac{\eps}{2} \right)$. By definition of $\FPTASKnapsack$, we get that this update renders $T$ to be a feasible set of $k$ that additionally satisfies $v_k(A_k) < \left(1- \frac{\eps}{2} \right) v_k(T)$. 

Hence, we obtain that, for returned set $T$, the corresponding agent $k$ satisfies the first condition in the definition of a $(1-\eps)$-minimal envied set; see Definition \ref{defn:approx_envied_set}. 

Therefore, to prove that the returned set $T$ is a $(1-\eps)$-minimal envied set, it remains to show that there does not exist a strict subset $T' \subsetneq T$ and an agent $k' \in [n]$ such that $T'$ is $(1-\eps)$-envied by $k'$ (see Definition \ref{defn:approx_envied_set}). Here, the fact that the while-loop has terminated with set $T$ implies that for all agents $a' \in [n]$ and all goods $g' \in T$ we have $\prof_{a'}(A_{a'}) \geq \left(1-\eps/2 \right) \ v_{a'} \left( \FPTASKnapsack\left(a', T - g', \eps/2 \right)\right)$. Now, using Proposition \ref{proposition:kp-envy}, we get that no agent $(1-\eps)$-envies $T -g'$ for every $g' \in T$. That is, no agent $(1-\eps)$-envies any strict subset of $T$. Line \ref{line:make-feasible} in fact updates $T$ to one of its subsets and, overall, we get that the returned set $T$ is a $(1-\eps)$-minimal envied set. The lemma stands proved.
\end{proof}

Now, we proceed to analyze the algorithm $\ComputeApproxFEFx{}$ (Algorithm \ref{alg:approx_indivisible-efx}). Here, we write $\mathcal{A}\supscript{t}=\left(A_1\supscript{t}, A_2\supscript{t}, \ldots,A_n\supscript{t}\right)$ to denote the allocation maintained by $\ComputeApproxFEFx{}$ just before the $t\Th$ iteration of the while loop of \cref{alg:approx_indivisible-efx}. In particular, $\mathcal{A}\supscript{1} = (\emptyset, \ldots, \emptyset)$.

We first note that the agents' valuations increase multiplicatively as $\ComputeApproxFEFx{}$ progresses. 
\begin{claim}
\label{claim:val-inc-fac}
Let $k \in [n]$ be the agent whose bundle is updated in iteration $t \geq 1$ of $\ComputeApproxFEFx{}$. Then, in the $t\Th$ iteration, agent $k$'s value increases by a multiplicative factor of $\frac{1}{1-\eps/2}$: 
\begin{align*}
v_{k}\left(A_k\supscript{t}\right)>\left( \frac{1}{1-\eps/2} \right) \ v_k\left(A_k\supscript{t-1}\right).
\end{align*}
\end{claim}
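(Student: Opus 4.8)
The plan is to unwind the guarantee provided by the subroutine $\ApproxMinEnviedSubset$ and then rearrange a single inequality. First I would fix the iteration $t$ in which agent $k$'s bundle is updated, and let $T$ denote the set returned by the call $\ApproxMinEnviedSubset(C,\mathcal{A},\eps)$ in \cref{line:approx-min-envied} of $\ComputeApproxFEFx{}$. By the update step that immediately follows, the new bundle of $k$ equals $T$, so in the notation of the claim we have $A_k\supscript{t}=T$, while $A_k\supscript{t-1}$ is the bundle held by $k$ just before this update.

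Next I would invoke \cref{lem:approx-min-envied}, which certifies that $(T,k)$ is a valid output of the subroutine: $T$ is a $(1-\eps)$-minimal envied set and $k$ is an associated $(1-\eps/2)$-envying agent for whom $T$ is feasible. The precise fact I want to extract is the one established in the proof of \cref{lem:approx-min-envied}, namely that after \cref{line:make-feasible} the returned set $T$ (a knapsack solution $\FPTASKnapsack(k,\cdot,\eps/2)$) is feasible for $k$ and satisfies the strict inequality $v_k(A_k\supscript{t-1}) < \left(1-\frac{\eps}{2}\right) v_k(T)$. This is exactly the first condition in the definition of a $(1-\eps)$-minimal envied set (\cref{defn:approx_envied_set}), specialized to the feasible subset $S=T$.

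Finally I would substitute $v_k(T)=v_k(A_k\supscript{t})$ and observe that $1-\frac{\eps}{2}>0$ for every $\eps\in(0,1]$; dividing both sides of $v_k(A_k\supscript{t-1}) < \left(1-\frac{\eps}{2}\right) v_k(A_k\supscript{t})$ by $\left(1-\frac{\eps}{2}\right)$ yields $v_k(A_k\supscript{t}) > \frac{1}{1-\eps/2}\, v_k(A_k\supscript{t-1})$, which is the claim. The argument is essentially a restatement of the envy guarantee, so there is no real obstacle; the only point that requires care is to read off the strict $(1-\eps/2)$-envy bound with respect to the \emph{final} returned set $T$ produced by \cref{line:make-feasible}, rather than an intermediate set maintained inside the while-loop of the subroutine. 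Only for this final $T$ do we simultaneously have feasibility for $k$ and the strict bound, and it is precisely the truncation to a knapsack solution in \cref{line:make-feasible} that licenses using $v_k(T)$ as agent $k$'s new value.
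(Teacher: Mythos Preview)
Your proof is correct and follows essentially the same approach as the paper: invoke \cref{lem:approx-min-envied} to conclude that the returned pair $(T,k)$ satisfies the first condition of \cref{defn:approx_envied_set}, then rearrange the resulting inequality $v_k(A_k\supscript{t-1})<(1-\eps/2)\,v_k(T)$ using $A_k\supscript{t}=T$. Your added remark about the role of \cref{line:make-feasible} (ensuring the final $T$ is feasible for $k$ so that $v_k(T)$ really is agent $k$'s new value) makes explicit a point the paper leaves implicit in its appeal to the definition.
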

\begin{proof}
In Line \ref{line:approx-min-envied} of $\ComputeApproxFEFx{}$ we select a $(1-\eps)$-minimal envied set $T$ and assign it to the corresponding agent $k$ (i.e., set $A\supscript{t}_k = T$). By definition of minimal envied sets, we have that $T$ is $\left(1-\frac{\eps}{2} \right)$-envied by agent $k$. Hence, $v_k \left(A\supscript{t}_k \right) = v_k(T) > \left(\frac{1}{1 - \eps/2} \right) \ v_k \left(A\supscript{t-1}_k\right)$. This completes the proof.  
\end{proof}

The next lemma provides a useful invariant maintained by $\ComputeApproxFEFx{}$. 

\begin{restatable}{lemma}{LemmaApproxEFxAgents}
\label{lem:approx-efx-agents}
In \cref{alg:approx_indivisible-efx} ($\ComputeApproxFEFx{}$), for each iteration count $t \geq 1$, the maintained allocation $\mathcal{A}\supscript{t} = \left(A_1\supscript{t}, \ldots,A_n\supscript{t}\right)$ is feasible and it upholds the $\approxFEFx{(1-\eps)}$ criterion among all the agents: for each pair of agents $a, b \in [n]$ (with $A_b\supscript{t} \neq \emptyset$), and every strict subset $S \subsetneq A_b$, that is feasible for $a$, the following inequality holds $v_a\left(A_a\supscript{t}\right)\geq (1-\eps)v_a\left(S\right)$.
\end{restatable}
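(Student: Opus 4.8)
The plan is to prove the statement by induction on the iteration count $t$, paralleling the exact argument behind Lemma \ref{lem:efx-agents}. The base case $t = 1$ is immediate: the allocation $\mathcal{A}\supscript{1} = (\emptyset, \ldots, \emptyset)$ is trivially feasible, and the $\approxFEFx{(1-\eps)}$ condition holds vacuously, since there is no agent $b$ with $A_b\supscript{1} \neq \emptyset$. For the inductive step, I would assume the claim for $\mathcal{A}\supscript{t}$ and analyze the single update that produces $\mathcal{A}\supscript{t+1}$. The key structural observation is that iteration $t$ only rewrites the bundle of the agent $k$ returned by $\ApproxMinEnviedSubset$, setting $A_k\supscript{t+1} = T$ (where $T \subseteq C\supscript{t}$), while leaving $A_a\supscript{t+1} = A_a\supscript{t}$ for every $a \neq k$; since $T$ is drawn from the charity, the bundles remain pairwise disjoint and the allocation stays valid.

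For feasibility, I would invoke Lemma \ref{lem:approx-min-envied} (together with the first condition in Definition \ref{defn:approx_envied_set}, which is enforced by Line \ref{line:make-feasible}), guaranteeing $s_k(T) \leq B_k$; the remaining bundles are unchanged and hence feasible by the inductive hypothesis. For the $\approxFEFx{(1-\eps)}$ condition I would split on which of the two agents equals $k$. If neither $a$ nor $b$ equals $k$, then both $A_a$ and $A_b$ are untouched and the inequality is inherited directly from the inductive hypothesis. If $a = k$ and $b \neq k$, then $A_b\supscript{t+1} = A_b\supscript{t}$, and by Claim \ref{claim:val-inc-fac} the value of $k$'s bundle strictly increases, $v_k(A_k\supscript{t+1}) > v_k(A_k\supscript{t})$; combining this with the inductive bound $v_k(A_k\supscript{t}) \geq (1-\eps)\, v_k(S)$ for every feasible $S \subsetneq A_b\supscript{t}$ closes this case.

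The crux -- and the step I expect to be the main obstacle -- is the case $b = k$, where the newly assigned bundle $A_k\supscript{t+1} = T$ must not be enviously preferred. Here I would use that $T$ is a $(1-\eps)$-minimal envied set: no strict subset $T' \subsetneq T$ is $(1-\eps)$-envied by any agent. Concretely, for any agent $a \neq k$ and any strict subset $S \subsetneq T$ that is feasible for $a$, taking $S' = S$ in Definition \ref{defn:approx_envied_set} shows that ``$S$ is not $(1-\eps)$-envied by $a$'' forces $(1-\eps)\, v_a(S) \leq v_a(A_a\supscript{t}) = v_a(A_a\supscript{t+1})$, which is exactly the desired inequality. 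The delicate point is that the minimality guarantee is expressed relative to the allocation $\mathcal{A}\supscript{t}$ at the moment the subroutine was called, so it is essential that $A_a\supscript{t+1} = A_a\supscript{t}$ for all $a \neq k$; the sub-case $a = b = k$ is then trivial by monotonicity of additive valuations, since $v_k(A_k\supscript{t+1}) = v_k(T) \geq v_k(S)$ for every $S \subsetneq T$. Assembling these cases completes the induction.
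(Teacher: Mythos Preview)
Your proposal is correct and follows essentially the same inductive argument as the paper's proof: both establish the base case trivially, then split the inductive step into the three cases according to which of $a,b$ equals the updated agent $k$, invoking Lemma~\ref{lem:approx-min-envied} (the $(1-\eps)$-minimality of $T$) for the case $b=k$ and Claim~\ref{claim:val-inc-fac} for the case $a=k$. Your explicit handling of the degenerate sub-case $a=b=k$ via monotonicity is a harmless addition that the paper leaves implicit.
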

\begin{proof}
We provide an inductive proof of the lemma. For the base case, note that  the $\approxFEFx{(1-\eps)}$ property trivially
holds among all the agents before the first iteration, since the initial allocation $\mathcal{A}\supscript{1} = (\emptyset, \ldots, \emptyset)$. Also, note that this allocation is feasible.

For the induction step, consider any iteration count $t>1$. By the induction hypothesis, we know that the allocation $\mathcal{A}\supscript{t-1}$ is feasible and it bears the $\approxFEFx{(1-\eps)}$ property among the agents. In addition, note that in the $t\Th$ iteration, the bundle of exactly one agent gets changed - let that agent be $k \in [n]$. That is, we have $A\supscript{t}_k =T$ and the bundles of all other agents remain unchanged. Lemma \ref{lem:approx-min-envied} implies that, in the iteration, the selected set $T$ (see Line \ref{line:approx-min-envied} of \cref{alg:approx_indivisible-efx}) is a $(1-\eps)$-minimal envied set, with corresponding agent $k$. By definition of minimal envied sets, we get that $T$ is feasible for $k$ and, hence, the feasibility of the allocation  $\mathcal{A}\supscript{t}$ follows. 

We will next complete the induction step by showing that, in the updated allocation $\mathcal{A}\supscript{t}$ and for each agent $a \in [n]$, the $\approxFEFx{(1- \eps)}$ property holds with respect to all other agents $b \in [n]$. Note that for any pair of agents $a, b \in [n]$ one of the following cases apply.

\noindent \emph{Case 1:}  Agent $a\ne k$ and agent $b=k$.
Here, we use the fact that the set $T$ selected in Line \ref{line:line1} is a $(1-\eps)$-minimal envied set; see Lemma \ref{lem:approx-min-envied}. Hence, by definition of minimal envied sets, we get that agent $a$ does not $(1-\eps)$-envy any strict subset of $A\supscript{t}_b = T$. Considering Definitions \ref{defn:approx_envied_set} and \ref{defn:apx-fefx}, we get that, in this case, $a$ is $\approxFEFx{(1-\eps)}$ towards agent $b =k$.

\noindent \emph{Case 2:}  Agent $a= k$ and agent $b\ne k$. In this case, by the induction hypothesis, for every strict subset $S\subsetneq A_b\supscript{t-1} =A_b\supscript{t}$, we have 
\begin{align*}
(1- \eps)\prof_a(S) & \leq v_a\left(A_a\supscript{t-1}\right) \\
& < v_a\left(A_a\supscript{t}\right) \tag{via Claim \ref{claim:val-inc-fac}}
\end{align*}
Hence, in this case as well the desired $\approxFEFx{(1-\eps)}$ guarantee holds.

\noindent \emph{Case 3:} Agent $a \neq k$ and agent $b \neq k$. In this case, $A_a\supscript{t-1}=A_a\supscript{t}$ and $A_b\supscript{t-1}=A_b\supscript{t}$. Hence, by the induction hypothesis, agent $a$ bears $\approxFEFx{(1-\eps)}$ towards $b$.

This exhaustive case analysis completes the proof of the lemma.
\end{proof}

The lemma below states that, in Algorithm \ref{alg:approx_indivisible-efx}, if and when the while-loop terminates, every agent bears $\approxFEFx{(1- \eps)}$ against the charity.

\begin{restatable}{lemma}{LemmaApproxEFxCharity}
\label{lem:approx-efx-charity}
Let the while-loop of $\ComputeApproxFEFx{}$ (\cref{alg:approx_indivisible-efx}) terminate with allocation $\mathcal{A}=(A_1,\ldots,A_n)$.
Then, every agent is $\approxFEFx{(1-\eps)}$ against the charity $C = [m] \setminus \left( \cup_{a=1}^n A_a \right)$; in particular, for each agent $a \in [n]$ and every feasible subset $S \subseteq C$, we have $v_a(A_a) \geq (1-\eps)v_a(S)$.
\end{restatable}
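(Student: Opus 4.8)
The plan is to read the conclusion off directly from the loop's termination condition together with Proposition \ref{proposition:kp-envy}, so the proof will be short. When the while-loop of \cref{alg:approx_indivisible-efx} exits, the negation of its guard holds: there is no agent $a \in [n]$ with $v_a(A_a) < \left(1 - \tfrac{\eps}{2}\right) v_a(\FPTASKnapsack(a, C, \tfrac{\eps}{2}))$. Hence, for every agent $a \in [n]$, the reverse inequality
\[
v_a(A_a) \;\geq\; \left(1 - \tfrac{\eps}{2}\right) v_a\!\left(\FPTASKnapsack\!\left(a, C, \tfrac{\eps}{2}\right)\right)
\]
is in force at termination.

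Next I would invoke the complementary half of Proposition \ref{proposition:kp-envy}, applied with the set $T = C$ and accuracy parameter $\tfrac{\eps}{2}$ playing the role of the proposition's $\eps$. Since $\eps \in (0,1]$ gives $\tfrac{\eps}{2} \in (0, \tfrac{1}{2}]$, the proposition applies, and the displayed inequality certifies that agent $a$ does not $\left(1 - 2 \cdot \tfrac{\eps}{2}\right)$-envy $C$, i.e.\ $a$ does not $(1-\eps)$-envy the charity. I would then unfold Definition \ref{defn:approx_envied_set}: ``$a$ does not $(1-\eps)$-envy $C$'' means precisely that no feasible subset $S \subseteq C$ satisfies $(1-\eps) v_a(S) > v_a(A_a)$; equivalently, $v_a(A_a) \geq (1-\eps) v_a(S)$ for every feasible $S \subseteq C$. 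As this holds for each agent $a \in [n]$, the allocation $\mathcal{A}$ is $\approxFEFx{(1-\eps)}$ against the charity, which completes the argument.

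The step requiring the most care is the factor bookkeeping in the second paragraph: the loop guard and the FPTAS call both use parameter $\tfrac{\eps}{2}$, and it is the one-sided error of the FPTAS (packaged in Proposition \ref{proposition:kp-envy}) that upgrades a ``$(1-\tfrac{\eps}{2})$-fraction of the \emph{approximate} knapsack value'' guarantee into a genuine no-$(1-\eps)$-envy statement against the \emph{exact} optimum $\Knapsack(a, C)$. The only hypothesis that must be explicitly verified is that $\tfrac{\eps}{2}$ lies in the admissible range $(0,\tfrac{1}{2}]$ of the proposition, which holds because $\eps \le 1$. Beyond this, the argument is a direct translation between the termination condition and the definition of envy, so I do not expect a genuine obstacle.
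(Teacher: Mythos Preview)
Your proposal is correct and follows essentially the same approach as the paper: read off the inequality from the negated loop guard, apply Proposition~\ref{proposition:kp-envy} with parameter $\tfrac{\eps}{2}$ and $T=C$, and unfold Definition~\ref{defn:approx_envied_set}. Your additional remark verifying that $\tfrac{\eps}{2}\in(0,\tfrac{1}{2}]$ lies in the proposition's admissible range is a nice bit of extra care that the paper leaves implicit.
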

\begin{proof}
The termination condition of the while-loop in $\ComputeApproxFEFx{}$ ensures that, with respect to the allocation $\mathcal{A}$ and for all $a \in [n]$, we have $\prof_a(A_a) \geq \left(1- \frac{\eps}{2} \right) \ v_a \left( \FPTASKnapsack \left(a,C, \frac{\eps}{2} \right) \right)$. Using Proposition \ref{proposition:kp-envy}, we get that no agent $(1-\eps)$-envies the charity. Therefore, every agent is $\approxFEFx{(1-\eps)}$ against the charity $C$; see Definitions \ref{defn:approx_envied_set} and \ref{defn:apx-fefx}. The lemma stands proved. 
\end{proof}

We now establish the main result of this section. 

\begin{theorem}
\label{thm:approx_indivisible_efx}
Given any fair division instance $\angParen{[m], [n], \{ v_a(g) \}_{a,g}, \{s_a(g)\}_{a,g}, \{B_a\}_{a}}$ with generalized assignment constraints and parameter $\eps\in(0,1)$, $\ComputeApproxFEFx{}$ (\cref{alg:approx_indivisible-efx}) computes a $\approxFEFx{(1-\eps)}$ allocation in time polynomial in $\frac{1}{\eps}$ and the input size. 
\end{theorem}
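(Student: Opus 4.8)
The plan is to separate correctness from the running-time bound. Correctness is essentially immediate from the invariants already established: Lemma~\ref{lem:approx-efx-agents} shows that every maintained allocation $\mathcal{A}\supscript{t}$---and hence the allocation $\mathcal{A}$ returned at termination---is $\approxFEFx{(1-\eps)}$ among the agents, while Lemma~\ref{lem:approx-efx-charity} shows that the terminal allocation is $\approxFEFx{(1-\eps)}$ against the charity. These two facts verify precisely the two bullet points of Definition~\ref{defn:apx-fefx}, so the returned allocation is $\approxFEFx{(1-\eps)}$. It therefore remains only to show that the while-loop terminates after a number of iterations that is polynomial in $\frac{1}{\eps}$ and the input size.

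Each iteration is cheap: evaluating the while-loop condition and each call to $\FPTASKnapsack$ run in time polynomial in $\frac{1}{\eps}$ and the input size via the standard Knapsack FPTAS, and the single call to $\ApproxMinEnviedSubset$ in Line~\ref{line:approx-min-envied} stays within this budget by Lemma~\ref{lem:approx-min-envied}. Thus the whole analysis reduces to bounding the iteration count, which is the main obstacle: unlike the pseudo-polynomial algorithm of Section~\ref{section:gac-fefx}, here the social welfare need not increase by an additive unit per iteration, so that counting argument does not apply. Instead I would exploit the multiplicative gain from Claim~\ref{claim:val-inc-fac}.

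The key observations are as follows. First, each agent's bundle value is non-decreasing throughout the run, since in any iteration only the updated agent's bundle changes and the newly assigned set is drawn from the charity, so no agent's value ever drops. Second, whenever an agent $k$ is the one updated, Claim~\ref{claim:val-inc-fac} gives that $v_k(A_k)$ grows by a factor strictly exceeding $\frac{1}{1-\eps/2}$. The first update of $k$ raises its value from $0$ to a strictly positive integer, hence to at least $1$ (by the integrality of values); every later update multiplies the value by more than $\frac{1}{1-\eps/2}$, while the value can never exceed $v_k([m])$. Hence $k$ can be updated at most $1 + \frac{\ln v_k([m])}{\ln\frac{1}{1-\eps/2}}$ times.

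To conclude, I would use the elementary bound $\ln\frac{1}{1-\eps/2} = -\ln(1-\eps/2) \geq \eps/2$, giving at most $1 + \frac{2\ln v_k([m])}{\eps}$ updates per agent. Summing over the $n$ agents bounds the total number of iterations by $n + \frac{2}{\eps}\sum_{a=1}^n \ln v_a([m])$, which is polynomial in $\frac{1}{\eps}$ and the input size because $\ln v_a([m])$ is proportional to the bit-length of the input values. Combined with the polynomial per-iteration cost, this yields the claimed running time and, together with the correctness argument above, completes the proof.
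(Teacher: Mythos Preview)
Your proposal is correct and follows essentially the same approach as the paper: correctness via Lemmas~\ref{lem:approx-efx-agents} and~\ref{lem:approx-efx-charity}, and termination by bounding the number of updates per agent using the multiplicative gain of Claim~\ref{claim:val-inc-fac}. The only cosmetic difference is that the paper lower-bounds the value after an agent's first update by $\min_g v_a(g)$ rather than by $1$ via integrality, but the resulting iteration bounds are equivalent for the stated purpose.
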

\begin{proof}
Lemmas \ref{lem:approx-efx-agents} and \ref{lem:approx-efx-charity} show that $\ComputeApproxFEFx{}$ computes a $\approxFEFx{(1-\eps)}$ allocation. 

It remains to bound the runtime of the algorithm. Towards this, note that, when the bundle of any agent $a$ gets updated for the first time (i.e., gets updated from the initialized empty bundle), the agent receives a value of at least $\min_g v_a(g)$. Subsequently, in any iteration in which agent $a$'s bundle is updated, $a$'s value strictly increases by a factor of $\delta \coloneqq \frac{1}{1-\eps/2}$ (Claim \ref{claim:val-inc-fac}). Hence, for any agent $a$, the number of iterations
in which its bundle gets updated is upper-bounded by $\log_{\delta}\left(\frac{\prof_a([m])}{\min_{g}\prof_a(g)}\right) + 1$. Summing over all the agents, we obtain that the total number of iterations in $\ComputeApproxFEFx{}$ is upper bounded by a polynomial in $\frac{1}{\eps}$ and the input size (in particular, the bit complexity of the values). Recall that the subset $\FPTASKnapsack(\cdot)$ can be computed via an FPTAS for the Knapsack problem and $\ApproxMinEnviedSubset$ takes time polynomial in $\frac{1}{\eps}$ and the input size (see \cref{{lem:approx-min-envied}}). Combining these arguments, we obtain the stated bound on the time complexity of $\ComputeApproxFEFx{}$. This completes the proof of the theorem.
\end{proof}

\begin{remark}
\label{rem:extend-matroid}
As mentioned previously, discrete fair division under matroid constraints entails assigning each agent $a \in [n]$ a bundle $A_a$ that is independent with respect to a matroid $\mathcal{M}_a = ([m], \mathcal{I}_a)$. Recall that, for any matroid, one can find, in polynomial-time, an independent set with maximum possible value \cite{schrijver}. That is, the matroid-analog of \ $\Knapsack(a, T)$ (see equation (\ref{defn:knapsack})) can be computed in polynomial time. Hence, $\ComputeFEFx$ can be used to design a pseudo-polynomial time algorithm to find $\FEFx$ allocations even under matroid constraints. 

Similarly, $\ComputeApproxFEFx{}$ can be slightly modified to obtain an FPTAS for computing $\FEFx$ allocations under matroid constraints.
\end{remark}

\section{Conclusion and Future Work}
We obtain strong fairness guarantees under generalized assignment constraints. We establish that, under the assignment constraints, feasibly envy-free allocations of divisible goods always exist and can be computed efficiently. Our proof template for this result is interesting in its own right. Furthermore, in the context of indivisible goods and for the assignment constraints, we obtain universal existence of allocations that are feasibly envy-free up to any good. The constructive proof here extends to a pseudo-polynomial time algorithm and an FPTAS. 

Given the positive result for $\FEF$ allocations, it would be interesting to establish an analog of Weller's theorem \cite{weller1985fair}, i.e., to show that, under generalized assignment constraints, there always exists an $\FEF$ allocation of divisible goods that is also Pareto efficient (among all feasible allocations). 
Studying polyhedral constraints---as a generalization of the assignment ones---is another interesting direction of work. 

Under generalized assignment constraints, the computation of $\FEFx$ allocations of indivisible goods is {\rm NP}-hard. However, under matroid constraints, the polynomial-time tractability of $\FEFx$ allocations remains an interesting, open question.

\bibliographystyle{alpha}
\bibliography{references}

\newpage 
\appendix
\section{Missing Proof from Section \ref{section:divisible-fef}}
\label{appendix:divisible-fef}

We restate and prove Proposition \ref{proposition:no-cons} here. 

\PropNoCons*
\begin{proof}
The allocation $\overline{x}= \left(\overline{x}_1, \ldots, \overline{x}_n \right)$ is obtained from $x=(x_1, \ldots, x_n)$ by removing the (fictional) good $m+1$ from consideration. Recall that the good $m+1$ has zero value for all the agents. Hence, for all agents $a \in [n]$, the value under the two allocations remains unchanged, $v_a(\overline{x}_a) = v_a(x_a)$. 

Furthermore, we will show that Definition \ref{definition:fef} holds for allocation $\overline{x}$, i.e., $\overline{x}$ is an $\FEF$ allocation in the underling instance. Towards this, fix an agent $a \in [n]$. Now, consider any relevant (feasible) fractional assignment $\overline{y} \in [0,1]^m$ considered for $a$ in the definition; in particular, $\overline{y} \leq \overline{x}_b$, for some agent $b$, or $\overline{y}  \leq \overline{x}_{\charity}$. Extend $\overline{y}$ to obtain the fractional assignment $y  \in [0,1]^{m+1}$ as follows: set $y_g = \overline{y}_g$, for all $g \in [m]$, along with $y_{m+1} =  0$. Note that fractional assignment $y \in [0,1]^{(m+1)}$ is considered when applying Definition \ref{definition:fef} for allocation $x \in [0,1]^{n \times (m+1)}$ and the fact that $x$ is $\FEF$ implies $v_a(x_a) \geq v_a(y)$. Therefore, $v_a(\overline{x}_a) = v_a(x_a) \geq v_a(y) = v_a(\overline{y})$. Since this inequality holds for all agents and relevant fractional assignments $\overline{y} \in [0,1]^m$, we obtain that $\overline{x}$ is an $\FEF$ allocation. 
\end{proof}
\section{Missing Proofs from Section \ref{section:indivisible-fefx}}
\label{appendix:indivisible-fefx}
This section restates and proves Lemmas \ref{lem:efx-agents} and \ref{lem:efx-charity}.

\LemmaEFxAgents* 
\begin{proof}
The lemma directly follows from an inductive argument. For the base case, note that just before the first iteration, the \FEFx{} property trivially
holds among all the agents, since we start with $\mathcal{A}\supscript{1} = (\emptyset, \ldots, \emptyset)$. Also, note that this allocation is feasible.

For the induction step, consider any iteration count $t>1$. By the induction hypothesis, we know that the allocation $\mathcal{A}\supscript{t-1}$ is feasible and it bears the \FEFx{} property among the agents. In addition, note that in the $t \Th$ iteration, the bundle of exactly one agent gets changed - let that agent be $k \in [n]$. In this iteration, the selected set $T$ (see Line \ref{line:line1}) is a minimal envied set and is envied by agent $k$. Here, the minimality of $T$ ensures that it is a feasible set for agent $k$ (see Definition \ref{defn:envied_set}). Hence, setting $A\supscript{t}_k =T$, and keeping the bundles of all other agents unchanged, yields a feasible allocation $\mathcal{A}\supscript{t}$. 

We will next complete the induction step by showing that, in the updated allocation $\mathcal{A}\supscript{t}$ and for each agent $a \in [n]$, the $\FEFx$ property holds with respect to all other agents $b \in [n]$. Note that for any pair of agents $a, b \in [n]$ one of the following cases apply. 

\noindent \emph{Case 1:}  Agent $a\ne k$ and agent $b=k$.
Here, we use the fact that the set $T$ selected in Line \ref{line:line1} is a minimal envied set. Assume, towards a contradiction, that, in the updated allocation, agent $a$ is not $\FEFx$ towards agent  $b=k$. That is, there exists a strict subset $S \subsetneq A\supscript{t}_b = T$, that is feasible for $a$ and it satisfies $\prof_a(S)>\prof_a(A_a)$. This, however, contradicts the minimality of $T$, since it admits a strict subset (namely, $S$) that is envied (by agent $a$). Hence, in the current case, $a$ is \FEFx{} towards agent $b =k$. 

\noindent \emph{Case 2:} Agent $a=k$ and agent $b\ne k$. 
An important property of \cref{alg:indivisible-efx} is that whenever the bundle of an agent $a = k$ gets updated in an iteration, its value strictly increases, $v_{a}\left(A_a\supscript{t}\right)>v_a\left(A_a\supscript{t-1}\right)$. This follows from the fact that in Line \ref{line:line1}, we select a set $T$ which is envied by $a=k$ and, hence, has strictly higher value than $A\supscript{t-1}_a$, under valuation $v_a$; see \cref{defn:envied_set}. Then, we swap $A\supscript{t-1}_a$ with $T$. Using this observation and the induction hypothesis, we have $v_a\left(A_a\supscript{t}\right)>v_a\left(A_a\supscript{t-1}\right)\ge v_a(S)$, for every strict and feasible subset $S\subseteq A_b\supscript{t-1} =A_b\supscript{t}$. That is, in this case as well the desired $\FEFx$ guarantee holds. 

\noindent \emph{Case 3:} Agent $a \neq k$ and agent $b \neq k$. In this case, $A_a\supscript{t-1}=A_a\supscript{t}$ and $A_b\supscript{t-1}=A_b\supscript{t}$. Hence, by the induction hypothesis, agent $a$ bears $\FEFx$ towards $b$. 

This exhaustive case analysis completes the proof of the lemma.
\end{proof}

\LemmaEFxCharity*
\begin{proof}
The execution condition of the while loop ensures that the loop terminates only when no agent $a \in [n]$ envies the charity $C$. That is, for every feasible subset $S \subseteq C$ we have $v_a(A_a) \geq v_a(S)$. Hence, the stated $\FEF$ guarantee holds for all agents $a \in [n]$. This completes the proof. 
\end{proof}

\subsection{Missing Proof from Section \ref{section:gac-fefx}}
\label{appendix:knapsack}
Here, we provide a proof of Lemma \ref{lem:ks-min-envied}.

\LemmaMinEvSub*
\begin{proof}
We first establish the time complexity of the subroutine $\FindMinimalEnviedSubset$. Recall that, for any agent $a$ and set of goods $T$, the subset $\Knapsack(a, T)$ can be computed in time $O(m B_a)$. Hence, the execution condition of the while-loop in the subroutine can be evaluated in $O(nm^2 \max_a B_a)$ time. In addition, note that, in each iteration of its while-loop, the subroutine decrements the size of the maintained set $T$. Hence, $\FindMinimalEnviedSubset$ iterates at most $O(m)$ times. Therefore, as stated in the lemma, the time complexity of the subroutine is $O\left({\rm poly}(n, m) \max_a B_a \right)$. 

To complete the proof of the lemma, we show that the returned set is a minimal envied set for the associated agent. Note that throughout the execution of the while-loop in the subroutine, the maintained set $T$ is envied by the corresponding agent $k$ -- this invariant is maintained by update rule (Line \ref{line:update-T}) and selection criterion of the loop. Hence, to prove that the returned set $T$ is a minimal envied set, we need to further show that there does not exist a strict subset $T' \subsetneq T$ and an agent $k' \in [n]$ such that $k'$ envies $T'$; see Definition \ref{defn:envied_set}. However, if such a strict subset $T' \subsetneq T$ exists, then we have a good $g' \in T$ such that $T- g' \supseteq T'$ is envied by $k'$. This would contradict that fact that the while-loop terminated with set $T$ in hand. Therefore, the set $T$ returned by the subroutine is a minimal envied set and the corresponding envying agent is $k$.
The lemma stands proved. 
\end{proof}

\section{Computational Hardness of $\FEFx$ Allocations}
\label{appendix:fefx-np-hard}
This section establishes that finding any $\FEFx$ allocation under generalized assignment constraints is $\nphard$.

\begin{theorem}
\label{theorem:fefx-np-hard}
Given a fair division instance $\langle [n], [m], \{v_a(g) \}_{a, g}, \{s_a(g)\}_{a,g}, \{B_a\}_a \rangle$ with generalized assignment constraints, computing an $\FEFx$ allocation is $\nphard$.
\end{theorem}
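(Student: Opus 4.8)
The plan is to establish \nphard{}ness via a polynomial-time many-one reduction from \textsc{Partition} (equivalently, \textsc{Subset-Sum}), which is \nphard. Given a \textsc{Partition} instance with positive integers $w_1, \ldots, w_m$ and $\sum_i w_i = 2T$, I would build a two-agent fair division instance whose goods encode the numbers: a good $g_i$ for each $i$, with value and size equal to $w_i$ for both agents, and budgets $B_1 = B_2 = T$. The intended correspondence is that a balanced split into two bundles of value exactly $T$ (with empty charity) is an $\FEFx$ allocation precisely when the \textsc{Partition} instance is a YES instance; conversely, I would argue that the structure of any $\FEFx$ allocation of the constructed instance lets us decide \textsc{Partition} in polynomial time. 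Since $\FEFx$ allocations are guaranteed to exist (Theorem \ref{thm:indivisible_efx}), the hardness cannot rest on existence; instead, the reduction must ensure that \emph{every} $\FEFx$ allocation reveals whether an exact partition exists.

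The forward direction is routine: from an exact partition $(P, [m]\setminus P)$ with $\size(P) = T$, assigning $A_1 = P$ and $A_2 = [m]\setminus P$ yields $v_a(A_a) = T$ for both agents and an empty charity, and every strict subset of the other agent's bundle has value strictly below $T$, so the allocation is $\FEFx$ (indeed $\FEF$). The difficult direction---and the main obstacle---is ruling out an $\FEFx$ allocation with the detectable property for NO instances, despite the \emph{strict}-subset clause in the definition of envy toward the charity. This clause creates a genuine loophole: a high-value combination can be parked entirely in the charity while each of its \emph{proper} subsets remains individually small, so that no agent envies the charity even though the agents' bundles are far from value $T$. A naive encoding therefore fails, because a NO instance can still be satisfied by dumping the surplus goods into a charity whose proper subsets are all low-valued.

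To close this loophole, the plan is to augment the construction with a small number of auxiliary \emph{enforcer} goods and to exploit the full generality of agent-specific sizes and values (the hallmark of generalized assignment constraints). The idea is to calibrate these enforcers so that whenever the two bundles fail to realize value exactly $T$ each, the residual goods necessarily leave the charity with a \emph{feasible} proper subset of value exceeding some agent's bundle value, thereby violating $\FEFx$; dually, the enforcers should make the balanced allocation the only way to simultaneously (i) keep each agent's bundle feasible, (ii) avoid envy between the two agents, and (iii) avoid envy toward every feasible proper subset of the charity. Concretely, I would use the agent-specific sizes to force the enforcer goods to be allocated (rather than parked in charity) and to make the target value a sharp threshold, so that the knapsack value $v_a(\Knapsack(a, C))$ of the residual charity jumps above the attainable bundle value exactly when no exact partition is used. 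Verifying that this calibration is both sound and complete (and that the reduction runs in polynomial time) is the crux of the argument; once it is in place, any polynomial-time algorithm computing an $\FEFx$ allocation would decide \textsc{Partition}, establishing \nphard{}ness.
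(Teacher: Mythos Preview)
Your proposal correctly identifies the central difficulty---the strict-subset clause in the charity condition lets a NO instance park awkward goods in the charity without triggering envy---but it does not actually overcome it. You defer the entire technical content to unspecified ``enforcer'' goods and an unspecified ``calibration,'' and then state that verifying this calibration ``is the crux of the argument.'' That crux is never supplied: you give no concrete values, sizes, or budgets for the enforcers, and no argument that \emph{every} $\FEFx$ allocation of the augmented instance determines the answer to \textsc{Partition}. As written, this is a plan to look for a reduction, not a reduction. (In particular, the naive two-agent construction you start from genuinely fails---as you yourself note---and nothing you add repairs it.)

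The paper takes a quite different route, and it is worth seeing why it sidesteps the obstacle you hit. It gives a \emph{Turing} (Cook) reduction from \textsc{Knapsack} using a \emph{single} agent and two auxiliary goods: one good of size $W$ and odd value $2\mu+1$, and one infeasible dummy good of value $0$ that is forced into the charity. Because the dummy is always in the charity, the strict-subset clause is neutralized: removing the dummy leaves the full remainder of the charity as a feasible strict subset, so the agent's bundle must beat the best knapsack solution over whatever is left. A parity argument then shows that the agent's bundle value is even exactly when $\mu < v^*(\mathcal{I})/2$ and odd otherwise, and binary search over $\mu$ recovers $v^*(\mathcal{I})$ in polynomially many calls to the $\FEFx$ oracle. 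The key ideas you are missing are (i) using an infeasible zero-value good to defeat the strict-subset loophole, and (ii) accepting a Turing reduction with a parametrized family of instances rather than insisting on a single many-one instance.
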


\subsection{Proof of Theorem \ref{theorem:fefx-np-hard}}
We provide a polynomial-time reduction from the well-known Knapsack problem to that of finding an $\FEFx$ allocation under generalized assignment constraints.
This Cook reduction establishes the theorem. 

Recall that in the Knapsack problem, we are given a set of items, $[m]$, and a knapsack of capacity $W \in \mathbb{Q}_+$. Each item $j \in [m]$ has a weight $w_j \in \mathbb{Q}_+$ and a value $v_j \in \mathbb{Q}_+$. The objective here is to find a maximum-valued subset of items with total weight at most $W$. Write $\mathcal{I}=\angParen{[m], \{w_i\}_{i\in [m]}, \{v_i\}_{i\in [m]}, W}$ to denote a knapsack instance and let $\Opt(\mathcal{I})$ denote an optimal solution of the instance, i.e., $\Opt(\mathcal{I})$ denotes a maximum-valued subset of weight at most $W$. Also, let $v^*(\mathcal{I})$ denote the optimal value, $v^*(\mathcal{I}) = \sum_{j \in \Opt(\mathcal{I})} \ v_j$.

Towards a Cook reduction, we will show that an algorithm for finding $\FEFx$ allocations can be used as a subroutine to compute, in polynomial time, the optimal value $v^*(\mathcal{I})$ for any given Knapsack instance $\mathcal{I}$. Here, let $\FindFEFx$ denote an algorithm that finds an (arbitrary) $\FEFx$ allocation for any given fair division instance with generalized assignment constraints.

Given a Knapsack instance $\mathcal{I}=\angParen{[m], \{w_i\}_{i\in [m]}, \{v_i\}_{i\in [m]}, W}$, we assume, without loss of generality, that the values of all the items, $v_j$-s, are even integers; otherwise, we can scale the values to be integers and then multiply each by $2$. This assumption ensures that, for the Knapsack instance at hand, the optimal value is an even integer. 

To solve the Knapsack problem with instance $\mathcal{I}=\angParen{[m], \{w_i\}_{i\in [m]}, \{v_i\}_{i\in [m]}, W}$, we will construct fair division instances $\mathcal{F}(\nphardparam)$, parameterized by nonnegative integers $\mu$, and consisting of $(m+2)$ goods and one agent (i.e., $n=1$). We set the budget of the agent $B_1 = W$ and define the sizes and the values of the $(m+2)$ goods for the agent as follows
\begin{align*}
v_1(g)=
		\begin{cases}
			v_g  &  \text{ if }g\in[m]\\
			2\nphardparam+1 &  \text{ if }g=m+1\\
			0 &  \text{ if }g=m+2
		\end{cases}
\qquad
s_1(g)=
		\begin{cases}
			w_g & \text{ if } g\in[m]\\
			W & \text{ if }g=m+1\\
			(W+1) & \text{ if }g=m+2
		\end{cases}
\end{align*}

Note that, for any nonnegative integer $\mu$, in the constructed fair division instance the $(m+1)\Th$ good has an odd value ($2\mu +1$) for the agent. Also, the $(m+2)\Th$ good is infeasible for the agent: $s_1(m+2) = W + 1 > B_1$.

The following lemma will be used in the analysis of the reduction.  

\begin{lemma}
\label{lemma:even-odd}
For any nonnegative integer $\nphardparam$ and constructed fair division instance $\mathcal{F}(\nphardparam)$, let $A_1(\nphardparam)$ be the bundle assigned to the agent in an $\FEFx$ allocation returned by $\FindFEFx$. Then, for all integers $\nphardparam\le \frac{v^*(\mathcal{I})}{2}-1$,  the value $v_1(A_1(\nphardparam))$ is even. On the other hand, for all $\nphardparam\ge \frac{v^*(\mathcal{I})}{2}$, the value $v_1(A_1(\nphardparam))$ is odd.
\end{lemma}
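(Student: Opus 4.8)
The plan is to reduce the entire lemma to a single structural fact: in the one-agent instance $\mathcal{F}(\mu)$, the parity of $v_1(A_1(\mu))$ is completely governed by whether the special good $m+1$ lies in the returned bundle. I would begin by recording two elementary properties of $\mathcal{F}(\mu)$. Since $s_1(m+2)=W+1>B_1$, the good $m+2$ is infeasible and hence can never belong to any feasible bundle $A_1$; it permanently resides in the charity $C=[m+2]\setminus A_1$. Because $m+2$ is always in the charity and always infeasible, \emph{every} feasible subset $S$ of the charity automatically avoids $m+2$ and is therefore a \emph{strict} subset of $C$. Consequently, with $n=1$ (so there are no other agents to envy), the \FEFx{} condition of Definition \ref{definition:fefx} collapses to the single requirement that $v_1(A_1)\ge v_1(S)$ for every feasible $S\subseteq ([m]\cup\{m+1\})\setminus A_1$; that is, the returned bundle must be at least as valuable as the best feasible subset of the unallocated goods. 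This is exactly where the gadget good $m+2$ earns its keep: it guarantees the strictness demanded by Definition \ref{definition:fefx} without weakening the comparison against the full Knapsack value of the charity.

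Next I would establish the parity dichotomy. Assuming, without loss of generality, that all Knapsack weights $w_j$ are positive integers (hardness of Knapsack already holds under this restriction), the good $m+1$, whose size equals the whole budget $W=B_1$, cannot coexist in a feasible bundle with any good of $[m]$. Hence if $m+1\in A_1$ then necessarily $A_1=\{m+1\}$ and $v_1(A_1)=2\mu+1$ is odd; while if $m+1\notin A_1$ then $A_1\subseteq[m]$, so $v_1(A_1)$ is a sum of the even values $v_g$ and is therefore even. It thus suffices to prove that $m+1\notin A_1(\mu)$ whenever $\mu\le \tfrac{v^*(\mathcal{I})}{2}-1$, and $m+1\in A_1(\mu)$ whenever $\mu\ge \tfrac{v^*(\mathcal{I})}{2}$.

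Both directions follow from the reduced \FEFx{} condition together with the fact that $v^*(\mathcal{I})$ is even, so $2\mu+1\ne v^*(\mathcal{I})$ for every integer $\mu$ (which is also why the two stated ranges of $\mu$ exhaust all integers). For $\mu\ge \tfrac{v^*(\mathcal{I})}{2}$, equivalently $2\mu+1>v^*(\mathcal{I})$, suppose toward a contradiction that $m+1\notin A_1(\mu)$; then $\{m+1\}$ is a feasible (size $W$) strict subset of the charity, so \FEFx{} forces $v_1(A_1(\mu))\ge 2\mu+1>v^*(\mathcal{I})$, contradicting $A_1(\mu)\subseteq[m]$ and hence $v_1(A_1(\mu))\le v^*(\mathcal{I})$. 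For $\mu\le \tfrac{v^*(\mathcal{I})}{2}-1$, equivalently $2\mu+1<v^*(\mathcal{I})$, suppose $m+1\in A_1(\mu)$, so $A_1(\mu)=\{m+1\}$; then an optimal Knapsack subset $O^*\subseteq[m]$ with $v_1(O^*)=v^*(\mathcal{I})$ is feasible and is a strict subset of the charity, whence \FEFx{} forces $2\mu+1=v_1(A_1(\mu))\ge v^*(\mathcal{I})$, again a contradiction. Invoking Theorem \ref{thm:indivisible_efx} to know that $\FindFEFx$ returns a genuine \FEFx{} allocation, these arguments apply to $A_1(\mu)$ and yield the claimed parities.

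The main obstacle, and the step warranting the most care, is confirming that this particular gadget really forces \FEFx{} to behave like exact Knapsack optimization for the single agent, rather than merely a $2$-approximate version of it. A naive bound only gives $2v_1(A_1)\ge v^*(\mathcal{I})$, which is too weak; the argument genuinely relies on the ``fills the entire budget'' property of $m+1$ to rule out mixing it with any good of $[m]$, and this is precisely where the positive-weight assumption is indispensable (with zero-weight goods present, a bundle $\{m+1\}\cup Z$ can be \FEFx{} yet have odd value for small $\mu$, breaking the dichotomy). I would therefore flag the positive-weight reduction explicitly and verify in both cases that the witnessing subset ($\{m+1\}$ in one case, $O^*$ in the other) is simultaneously feasible for the agent and a strict subset of the charity, the latter secured by the ever-present good $m+2$.
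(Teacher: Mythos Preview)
Your proof is correct and follows essentially the same approach as the paper's: reduce \FEFx{} with one agent to the requirement that $v_1(A_1)\ge v_1(S)$ for every feasible strict subset of the charity (using the infeasible good $m+2$ to guarantee strictness), then argue by contradiction in each regime of $\mu$ using either $\{m+1\}$ or an optimal Knapsack set as the witnessing subset. Your explicit positive-weight assumption is a nice bit of extra care---the paper tacitly relies on it when asserting that $m+1$, once assigned, must be the \emph{only} assigned good---and your observation that this is harmless for the hardness reduction is exactly right.
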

\begin{proof}
As mentioned previously, the good $(m+2)$ cannot be assigned to the agent by $\FindFEFx$, since $s_1(m+2)=W+1>W=B_1$. That is, the good $m+2$ will always be in the charity $C(\nphardparam) \coloneqq [m+2]\setminus A_1(\nphardparam)$.
Hence, for the returned allocation to be $\FEFx$ (see \cref{definition:fefx}), we must have 
\begin{align}
v_1(A_1(\nphardparam))\ge v_1(S) & \quad \text{for any feasible subset }S\subseteq C(\nphardparam)\setminus\{m+2\}  \label{eq:np-hard-cond}
\end{align}
Recall that, by definition, a set $S$ is feasible iff $s_1(S) \leq B_1 = W$. Also, note that the good $m+1$ has size $s_1(m+1) = W=B_1$ and, hence, if this good is assigned to the agent by $\FindFEFx$, then it is the \emph{only} assigned good and all the remaining goods are in charity.

Now, we consider the two specified ranges for the nonnegative integer parameter $\mu$. \\

\noindent
\emph{Case 1:} $\nphardparam\le v^*(\mathcal{I})/2-1$. 
Here, in any $\FEFx$ allocation, the good $m+1$ cannot be assigned to the agent. Otherwise, it must be the case that $\Opt(\mathcal{I})\subseteq C(\nphardparam)\setminus\{m+2\}$. This would contradict \eqref{eq:np-hard-cond}, since we would have a strict, feasible subset $S\subsetneq C(\nphardparam)$ (namely, $S=\Opt(\mathcal{I})$) which the agent envies: $v_1(A_1(\nphardparam))=v_1(m+1)=2\nphardparam+1<2\nphardparam+2\le v^*(\mathcal{I})$.

Hence, we must have $A_1(\nphardparam)\subseteq [m]$. Now, given that the value of each of the first $m$ items is even, we obtain that the value $v_1(A_1(\nphardparam))$ is even.\\

\noindent\emph{Case 2:} $\nphardparam\ge v^*(\mathcal{I})/2$. In this case, in any $\FEFx$ allocation, the good $m+1$ must be allocated to the agent. Otherwise, $\{m+1\}\subsetneq C(\nphardparam)$ which implies that there exists a strict subset in $C(\nphardparam)$ (namely the singleton set $\{m+1\}$)
which the agent envies:  $v_1(A_1(\nphardparam))\le v^*(\mathcal{I})\le2\nphardparam<2\nphardparam+1=v_1(m+1)$.
This is again a contradiction of \eqref{eq:np-hard-cond}. Hence, it must be the case that $A_1(\nphardparam)=\{m+1\}$ and, hence, the value $v_1(A_1(\nphardparam))$ is odd.

This completes the proof of the lemma. 
\end{proof}

Using Lemma \ref{lemma:even-odd}, we infer that for $\nphardparam\in\{0,1,\dots,v^*(\mathcal{I})/2-1\}$, the value $v_1(A_1(\nphardparam))$ is even, and for $\nphardparam\in \left \{v^*(\mathcal{I})/2,\dots,\sum_{j \in [m]} v_j \right \}$, the value $v_1(A_1(\nphardparam))$ is odd. Therefore, via a binary search and, hence, in time $O \left( \log \left( \sum_{j \in [m]} v_j \right) \right)$, we can find the smallest value of $\nphardparam$ for which $v_1(A_1(\nphardparam))$ is odd. Write $\mu^*$ to denote this value; the runtime guarantee of the binary search ensures that $\mu^*$ can be computed in time that is polynomial in the bit complexity of the Knapsack instance $\mathcal{I}$. 

Lemma \ref{lemma:even-odd} implies that, for the given Knapsack instance, the optimal value $v^*(\mathcal{I}) = 2\nphardparam^*$. Hence, using the $\FindFEFx$ algorithm as a subroutine, we can solve the Knapsack problem in polynomial-time. That is, we have a polynomial-time reduction from the {\rm NP}-hard Knapsack problem to that of finding $\FEFx$ allocations. The theorem stands proved. 

\section{Maximum Nash Welfare does not imply Feasible Envy-Freeness}
\label{sec:mnw-doesnt-work}
The Nash social welfare of a given allocation is defined as the geometric mean of the agents' valuations
of their respective bundles.
Results in the fair division literature highlight the objective of maximizing the Nash social welfare as a means to achieve fairness \cite{caragiannis2019unreasonable,varian1974equity}. This approach, however, does not work in constrained settings.
In particular, via an example below, we show that maximizing Nash social welfare over feasible fractional assignments does not necessarily achieve feasible envy-freeness.

Consider a fair division instance with two agents $\{1,2\}$ and two goods $\{1,2\}$. The agents have equal budgets $B_1 = B_2 = 1$. The sizes and values for the two goods are specified in Figure \ref{figure:size-val}; the left table lists the values and the sizes for the first agent, and the right table is for the second agent.  

\begin{figure}[h]
\centering
\begin{tabular}{|c|c|c|}
\hline
 Good & Value & Size \\
\hline
 1 & 1 & 1  \\
\hline 
 2 & 0.5 & 1 \\
\hline
\end{tabular}
\hspace{20pt}
\begin{tabular}{|c|c|c|}
\hline
 Good & Value & Size \\
\hline
 1 & 1 & 1  \\
\hline 
 2 & 0.5 & 8\\
\hline
\end{tabular}
\caption{The left table contains the values and sizes of the two goods, respectively, for agent $1$. The right table lists these quantities for agent $2$.} 
\label{figure:size-val}
\end{figure}

Write $x^* = (x^*_1, x^*_2)$ to denote the allocation with the following fractional assignments, $x^*_1 = (1/30,29/30)$ and $x^*_2 = (29/30,1/240)$. That is, agent $1$ gets $1/30\Th$ fraction of good $1$ and $29/30\Th$ of good $2$, whereas agent $2$ receives $29/30\Th$ of good $1$ and $1/240\Th$ of good $2$. Note that $x^*=(x^*_1, x^*_2)$ is a feasible allocation: $s_1(x^*_1)=1/30+29/30=1$ and $s_2(x^*_2)=29/30+8/240=1$.

We will show that allocation $x^*$ maximizes Nash social welfare among all feasible allocations (Claim \ref{claim:max-nsw}). At the same time, $x^*$ is not $\FEF$ (Claim \ref{claim:has-envy}). The two claims will establish the stated non-implication. 

\begin{claim}\label{claim:max-nsw}
The allocation $x^*$ has the maximum Nash Social Welfare among all fractional, feasible allocations.
\end{claim}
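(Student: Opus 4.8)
The plan is to treat the maximization of Nash social welfare as a convex optimization problem and to certify the optimality of $x^*$ by exhibiting KKT multipliers. Since, for two agents, maximizing the geometric mean $\sqrt{v_1(x_1)\,v_2(x_2)}$ is equivalent to maximizing $\log v_1(x_1)+\log v_2(x_2)$, and each $v_a(x_a)$ is a linear (hence concave) function of the assignment variables, the objective is concave. The feasible region is the polytope cut out by the two supply constraints $x_{1,g}+x_{2,g}\le 1$ (for $g\in\{1,2\}$), the budget constraints $x_{1,1}+x_{1,2}\le 1$ and $x_{2,1}+8x_{2,2}\le 1$, and the box constraints $x_{a,g}\in[0,1]$; all of these are linear. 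Because we maximize a concave function over a convex set with linear constraints, the KKT conditions are \emph{sufficient} for global optimality, with no constraint qualification needed. Thus it suffices to produce nonnegative multipliers satisfying stationarity and complementary slackness at $x^*$.

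First I would identify the active constraints at $x^*$. A direct check (already performed in the feasibility verification above) shows that at $x^*$ the supply of good $1$ is exhausted and both agents' budgets are tight, whereas the supply constraint of good $2$ is slack and every coordinate of $x^*$ lies strictly inside $(0,1)$. Hence the only candidate-active constraints are good-$1$ supply and the two budgets, so by complementary slackness the multipliers for the good-$2$ supply and for all box constraints are $0$.

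Next I would write the stationarity equations. Computing the gradient of $\log v_1+\log v_2$ at $x^*$ (where $v_1(x^*_1)=\tfrac{31}{60}$ and $v_2(x^*_2)=\tfrac{31}{32}$) gives $\nabla f=\left(\tfrac{60}{31},\tfrac{30}{31},\tfrac{32}{31},\tfrac{16}{31}\right)$ in coordinates $(x_{1,1},x_{1,2},x_{2,1},x_{2,2})$. Matching this against the constraint gradients $(1,0,1,0)$ (good-$1$ supply), $(1,1,0,0)$ (budget of agent $1$), and $(0,0,1,8)$ (budget of agent $2$) yields a small linear system whose solution is $\lambda_1=\tfrac{30}{31}$, $\mu_1=\tfrac{30}{31}$, and $\mu_2=\tfrac{2}{31}$. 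I would then verify that these multipliers are nonnegative and that the remaining coordinate is consistent ($\lambda_1+\mu_2=\tfrac{32}{31}$), which completes the KKT certificate and establishes that $x^*$ is a global Nash-welfare maximizer.

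The computations are routine; the conceptual point to get right is that we only need the \emph{sufficiency} direction of KKT for a concave maximization over a polytope (so no Slater-type argument is required), together with a correct guess of the active set before solving for multipliers. As an independent cross-check, I would note that fixing this active set---good $1$ fully allocated and both budgets tight---reduces the problem to the single variable $t=x_{2,1}$, with $v_1=1-\tfrac12 t$ and $v_2=\tfrac{15t+1}{16}$; maximizing the resulting quadratic $(1-\tfrac12 t)(15t+1)=-7.5t^2+14.5t+1$ gives $t=\tfrac{29}{30}$, recovering $x^*$ exactly.
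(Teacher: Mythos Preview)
Your proposal is correct and takes a genuinely different route from the paper. The paper argues directly on the product $v_1(x_1)\,v_2(x_2)$: it successively substitutes the budget constraint of agent~$2$, then of agent~$1$, then the supply constraint of good~$1$ into an upper bound, reducing everything to a one-variable quadratic in $x_{1,1}$ (with $\delta=1/8$), and then checks that the maximizer of this quadratic is attained by $x^*$ with all the intermediate inequalities tight. You instead pass to the log objective, identify the active set at $x^*$, and exhibit explicit KKT multipliers $(\lambda_1,\mu_1,\mu_2)=(\tfrac{30}{31},\tfrac{30}{31},\tfrac{2}{31})$; the numbers check out, and your observation that sufficiency of KKT for a concave objective over a polytope needs no constraint qualification is exactly the right point. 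Your single-variable cross-check is essentially the paper's reduction reached from the other direction (you assume the active set and reduce, whereas the paper bounds first and then verifies tightness). The KKT approach is more systematic and would scale to larger examples; the paper's chained-inequality argument is more self-contained for a reader unfamiliar with convex optimization.
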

\begin{proof}
Consider an arbitrary, feasible allocation $x=(x_1, x_2)$, in which $x_{a,g}$ denotes the fraction of the good $g \in \{1, 2\}$ allocated to agent $a \in \{1, 2\}$.
Also, let parameter $\delta \coloneqq 1/8$. Since allocation $x$ is feasible for the given instance, it must satisfy the linear inequalities $K_\textcal{1}$ to $K_\textcal{5}$ stated below:

\begin{align*}
(K_\textcal{1})&\quad x_{1,1} + x_{1,2} \le 1 & \quad \text{(budget constraint of agent $1$)}\\
(K_\textcal{2})&\quad x_{2,1} + \frac{1}{\delta}x_{2,2} \le 1 \quad & \text{(budget constraint of agent $2$)} \\
(K_\textcal{3})&\quad x_{1,1} + x_{2,1} \le 1 & \quad \text{(at most unit fraction of good $1$ allocated to the agents)} \\
(K_\textcal{4})&\quad x_{1,2} + x_{2,2}\le 1 & \quad \text{(at most unit fraction of good $2$ allocated to the agents)}\\
(K_\textcal{5})& \quad x_{1,1}, x_{1,2},x_{2,1},x_{2,2}\in[0,1].
\end{align*}

Under allocation $x$, the product of the agents' values (i.e., square of the Nash welfare) is equal to $(x_{1,1}+0.5x_{1,2})(x_{2,1}+0.5x_{2,2})$. We upper bound this quantity as follows
\begin{align*}
(x_{1,1}+0.5x_{1,2})(x_{2,1}+0.5x_{2,2}) & \leq (x_{1,1}+0.5x_{1,2})\left(x_{2,1}+\frac{\delta}{2}(1-x_{2,1})\right)\tag{via $(K_\textcal{2})$}\\
&=(x_{1,1}+0.5x_{1,2})\left(\frac{\delta}{2}+(1-\delta/2)x_{2,1}\right)\\
& \leq  \frac{1}{2}(1+x_{1,1})\left(\frac{\delta}{2}+(1-\delta/2)x_{2,1}\right)\tag{via $(K_\textcal{1})$}\\
& \leq \frac{1}{2}(1+x_{1,1})\left(\frac{\delta}{2}+(1-\delta/2)(1-x_{1,1})\right)\tag{via $(K_\textcal{3})$}\\
\end{align*}
Define function $f(y) \coloneqq (1+ y)(\delta/2+(1-\delta/2)(1-y))$. Considering the derivative $f'(y)=\delta/2-(2-\delta)y$, one can show that the function $f(y)$ is maximized in $[0,1]$ at $y=\frac{\delta}{2(2-\delta)}$.

Hence, we define $x_{1,1}=\frac{\delta}{2(2-\delta)}$, $x_{1,2}=(1-x_{1,1})$, $x_{2,1}=(1-x_{1,1})$, $x_{2,2}= {\delta}  (1-x_{2,1})$. With this instantiation, all the inequalities $(K_\textcal{1})$ to $(K_\textcal{3})$ become equalities and $(K_\textcal{4})$ along with $(K_\textcal{5})$ are not violated. Furthermore, these choices of fractional assignments insure that all the inequalities in the above-mentioned upper bound hold with equality. That is, under this instantiation, the product of agents' values achieves the upper bound, which is also maximized (via the choice of $x_{1,1}$).

Therefore, this instantiation provides a feasible allocation with maximum Nash social welfare.
Substituting back $\delta=1/8$, we obtain $x_{1,1}=1/30$, $x_{1,2}=29/30$, $x_{2,1}=29/30$, $x_{2,2}=1/240$. These fractional assignments are exactly the same as in allocation $x^*$ and, hence, $x^*$ is a Nash optimal allocation. The claim stands proved.
\end{proof}

\begin{claim}\label{claim:has-envy}
The allocation $x^*$ is not feasibly envy-free ($\FEF$).
\end{claim}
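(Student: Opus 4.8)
The plan is to exhibit an explicit witness of envy, namely a feasible subset of agent $2$'s bundle that agent $1$ strictly prefers to her own. Since $\FEF$ (Definition \ref{definition:fef}) requires that \emph{every} agent be envy-free towards every other agent, falsifying the condition for a single ordered pair suffices. I would target the pair where agent $1$ evaluates agent $2$'s bundle, because good $1$ is highly valued yet small for agent $1$, whereas agent $2$ holds almost all of good $1$ under $x^*$.

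First I would compute agent $1$'s value for her own assigned bundle. With $x^*_1 = (1/30, 29/30)$ and $v_1(1)=1$, $v_1(2)=1/2$, this gives
\begin{align*}
v_1(x^*_1) = \frac{1}{30}\cdot 1 + \frac{29}{30}\cdot \frac{1}{2} = \frac{1}{30} + \frac{29}{60} = \frac{31}{60}.
\end{align*}
Next I would propose the fractional assignment $y = (29/30, 0)$, and verify the two requirements from Definition \ref{definition:fef}: that $y \le x^*_2$ (indeed $29/30 \le 29/30$ and $0 \le 1/240$), and that $y$ is feasible for agent $1$, since $s_1(y) = (29/30)\cdot 1 + 0 = 29/30 \le 1 = B_1$.

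Finally I would compare values: $v_1(y) = (29/30)\cdot 1 = 29/30$, and since $29/30 > 31/60 = v_1(x^*_1)$, agent $1$ strictly prefers the feasible subset $y$ of agent $2$'s bundle to her own bundle. Thus agent $1$ envies agent $2$, and $x^*$ violates Definition \ref{definition:fef}, so $x^*$ is not $\FEF$, as claimed.

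I do not anticipate any genuine obstacle here, as the argument reduces to two arithmetic comparisons. The only substantive choice is picking the right witness: the key observation is that good $1$ has unit size for agent $1$ and therefore almost the entirety of agent $2$'s share of good $1$ fits within agent $1$'s budget, yielding value close to $1$ while agent $1$'s own bundle is worth only $31/60$. One could equivalently take the full feasible slice $y = (29/30, 1/240)$, giving $v_1(y) = 31/32$, but the simpler witness $(29/30, 0)$ already settles the claim.
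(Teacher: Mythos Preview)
Your proof is correct and follows essentially the same route as the paper: both show that agent~$1$ envies agent~$2$ by exhibiting a feasible subset of $x^*_2$ with value exceeding $v_1(x^*_1)=31/60$. The only difference is the choice of witness---the paper uses the full bundle $x^*_2=(29/30,1/240)$ itself (which you mention as an alternative), while you use the slightly simpler $y=(29/30,0)$; either works.
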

\begin{proof}
Agent $1$ envies agent $2$ in allocation $x^*=(x^*_1, x^*_2)$. Note that, $x^*_2$ is a feasible fractional assignment for agent $1$, since $\size_1(x^*_2)=29/30+ 1/240 < 1 =B_1$. Furthermore, agent $1$'s value for the two bundles satisfy: $\prof_1(x^*_1)=1/30+(1/2)(29/30)=31/60$ and $\prof_1(x^*_2)=29/30+(1/2)(1/240)=465/480$. Hence, $\prof_1(x^*_1)<\prof_1(x^*_2)$, which shows that agent $1$ envies agent $2$. This completes the proof.
\end{proof}

Claim \ref{claim:max-nsw} and \ref{claim:has-envy} establish that a Nash social welfare maximizing allocation need not be feasibly envy-free. 

\end{document}